\newcommand{\red}[1]{{\textcolor{red}{#1}}}
\newcommand{\blue}[1]{\textcolor{blue}{#1}}
\newcommand{\magenta}[1]{\textcolor{magenta}{#1}}
\newcommand{\gd}[1]{\ensuremath{\mathsf{Greedy}[#1]}}
\newcommand{\lt}[1]{\ensuremath{\mathsf{Left}[#1]}}
\newcommand{\fd}[1]{\ensuremath{\mathsf{FirstDiff}[#1]}}
\newcommand{\alg}[1]{\ensuremath{\mathsf{Alg}[#1]}}
\newcommand{\prob}{\mathbf{Pr}}
\def\ShowComment{True}
\def\billy#1{\marginpar{$\leftarrow$\fbox{B}}\footnote{$\Rightarrow$~{\sf #1 \blue{--Billy}}}}
\def\amanda#1{\marginpar{$\leftarrow$\fbox{A}}\footnote{$\Rightarrow$~{\sf #1 \red{--Amanda}}}}
\def\john#1{\marginpar{$\leftarrow$\fbox{J}}\footnote{$\Rightarrow$~{\sf #1 \magenta{--John}}}}
\def\billy#1{}
\def\amanda#1{}
\def\john#1{}
\numberwithin{theorem}{section}
\newcommand{\TheTitle}{Balanced Allocation: Patience is not a Virtue} 
\newcommand{\TheAuthors}{J. Augustine, W. K. Moses Jr., A. Redlich, and E. Upfal}
\headers{\TheTitle}{\TheAuthors}
\title{{\TheTitle}\thanks{An earlier version of this paper appeared in \cite{AMRU16}. The previous version had only expectation upper bounds for average number of probes and no lower bounds. This version has lower bounds on maximum load and high probability upper bounds for average number of probes, as well as cleaner proofs for the same.
}}
\author{
  John Augustine\thanks{Department of Computer Science \& Engineering, Indian Institute of Technology Madras, Chennai, India.
    (\email{augustine@iitm.ac.in}). Supported by the IIT Madras New Faculty Seed Grant, the IIT Madras Exploratory Research Project, and the Indo-German Max Planck Center for Computer Science (IMPECS).}
  \and
  William K. Moses Jr.\thanks{Department of Computer Science \& Engineering, Indian Institute of Technology Madras, Chennai, India.
      (\email{wkmjr3@gmail.com}).}
  \and
  Amanda Redlich\thanks{Department of Mathematics, Bowdoin College, ME, USA (\email{aredlich@bowdoin.edu}). This material is based upon work supported by the National Science Foundation under Grant No. DMS-0931908 while the author was in residence at the Institute for Computational and Experimental Research in Mathematics in Providence, RI, during the Spring 2014 semester.}
  \and
    Eli Upfal\thanks{Department of Computer Science, Brown University, RI, USA (\email{eli@cs.brown.edu}).}
}
\theoremstyle{empty}
\newtheorem{duplicate}{NameIgnored}
\begin{document}

\date{}

\maketitle

\begin{abstract} 
 
Load balancing is a well-studied problem, with balls-in-bins being the primary framework. The greedy algorithm $\mathsf{Greedy}[d]$ of Azar et al.  places each ball by probing $d > 1$ random bins and placing the ball in the least loaded of them. With high probability, the maximum load under $\mathsf{Greedy}[d]$ is exponentially lower than the result when balls are placed uniformly randomly.
V\"ocking showed that a slightly asymmetric variant, $\mathsf{Left}[d]$, provides a further significant improvement.  However, this improvement comes at an additional computational cost of imposing structure on the bins.  

Here, we present a fully decentralized and easy-to-implement algorithm called $\mathsf{FirstDiff}[d]$ that combines the simplicity of  $\mathsf{Greedy}[d]$ and the improved balance of $\mathsf{Left}[d]$.   The key idea in $\mathsf{FirstDiff}[d]$ is to probe until a different bin size from the first observation is located, then place the ball. Although  the number of probes could be quite large for some of the balls,
we show that $\mathsf{FirstDiff}[d]$ requires only at most $d$ probes on average per ball (in both the standard and the heavily-loaded settings).  Thus the number of probes is no greater than either that of  $\mathsf{Greedy}[d]$ or $\mathsf{Left}[d]$. 
More importantly, we show that $\mathsf{FirstDiff}[d]$ closely matches the improved maximum load ensured by $\mathsf{Left}[d]$ in both the standard and heavily-loaded settings. We further provide a tight lower bound on the maximum load up to $O(\log \log \log n)$ terms.
We additionally give experimental data that $\mathsf{FirstDiff}[d]$ is indeed as good as $\mathsf{Left}[d]$, if not better, in practice.

%
\end{abstract}

\begin{keywords}
Load balancing,
FirstDiff,
Balanced allocation,
Randomized algorithms,
Task allocation
\end{keywords}

\begin{AMS}
  60C05, 60J10, 68R05
\end{AMS}


\section{Introduction}
\label{sec:intro}

Load balancing is the study of distributing loads across multiple entities such that the load is minimized across all the entities.  This problem arises naturally in many settings, including the distribution of requests across multiple servers, in peer-to-peer networks when requests need to be spread out amongst the participating nodes, and in hashing.   Much research has focused on practical implementations of solutions to these problems \cite{SX07,FXS11,SLZGJCP14}.  

Our work builds on several classic algorithms in the theoretical balls-in-bins model.  In this model, $m$ balls are to be placed sequentially into $n$ bins and each ball probes the load in random bins in order to make its choice.  Here we give a new algorithm, \fd{d}, which performs as well as the best known algorithm, \lt{d}, while being significantly easier to implement.

The allocation time for a ball is the number of probes made to different bins before placement. The challenge is to balance the allocation time versus the maximum bin load.  For example, using one probe per ball,  i.e. placing each ball uniformly at random, the maximum load of any bin when $m=n$ will be $\frac{\ln n}{\ln \ln n} (1 + o(1))$ (with high probability\footnote{We use the phrase ``with high probability" (or w.h.p. in short) to denote probability of the form $1 - O(n^{-c})$ for some suitable $c > 0$. Furthermore, every $\log$ in this paper is to base 2 unless otherwise mentioned.}) and total allocation time of $n$ probes~\cite{RS98}.  On the other hand, using $d$ probes per ball and placing the ball in the lightest bin, i.e. $\gd{d}$, first studied by Azar et. al \cite{ABKU99},  decreases the maximum load to $\frac{\ln \ln n}{\ln d} + O(1)$ with allocation time of $nd$. In other words, using $d \geq 2$ choices improves the maximum load exponentially, at a linear allocation cost.  

V\"ocking \cite{V03} introduced a slightly asymmetric algorithm, \lt{d}, which quite surprisingly guaranteed (w.h.p.) a maximum load of $\frac{\ln \ln n}{d \ln \phi_d} + O(1)$ (where $\phi_d$ is a constant between 1.61 and 2 when $d \geq 2$) using the same allocation time of $nd$ probes as \gd{d} when $m=n$. This analysis of maximum load for \gd{d} and \lt{d} was extended to the heavily-loaded case (when $m \gg n$) by Berenbrink et al. \cite{BCSV06}.  However, \lt{d} in \cite{V03} and \cite{BCSV06} utilizes additional processing.  Bins are initially sorted into groups and treated differently according to group membership.  Thus practical implementation, especially in distributed settings,  requires significant computational effort in addition to the probes themselves.  
\\
\\
\noindent {\bf Our Contribution.} 
We present a new algorithm, \fd{d}.  This algorithm requires no pre-sorting of bins; instead \fd{d} uses real-time feedback to adjust its number of probes for each ball\footnote{Thus we are concerned with the \emph{average} number of probes per ball throughout this paper.}.  

The natural comparison is with the classic $\gd{d}$ algorithm; $\fd{d}$ uses the same number of probes, on average, as $\gd{d}$ but produces a significantly smaller maximum load.  In fact, we show that the maximum load is as small as that of $\lt{d}$ when $m=n$.  Furthermore, it is comparable to $\lt{d}$ when heavily loaded.  For both the $m=n$ and heavily loaded cases, $\fd{d}$ has much lower computational overhead than $\lt{d}$.  

This simpler implementation makes \fd{d} especially suitable for practical applications; it is amenable to parallelization, for example, and requires no central control or underlying structure.  Some applications have a target maximum load and aim to minimize the necessary number of probes.  From this perspective, our algorithm again improves on $\gd{d}$: the maximum load of $\fd{\ln d}$ is comparable to that of $\gd{d}$, and uses exponentially fewer probes per ball.


\begin{duplicate}[Theorem~\ref{the:finite-light}]
  Use $\fd{d}$, where maximum number of probes allowed per ball is $2^{2d/3}$, to allocate $n$ balls into $n$ bins. The average number of probes required per ball is at most $d$ on expectation and w.h.p. Furthermore, the maximum load of any bin is at most $\frac{\log \log n}{0.66d} + O(1)$ with high probability when $d \geq 4$ and $n \geq \text{max}(2, n_0)$, where $n_0$ is the smallest value of $n$ such that for all $n > n_0$, $36 \log n \left(  \frac{72 e \log n}{5n} \right)^4 \leq \frac{1}{n^2}$.
\end{duplicate}

\begin{duplicate}[Theorem~\ref{the:finite-heavy}]
Use \fd{d}, where maximum number of probes allowed per ball is $2^{d/2.17}$, to allocate $m$ balls into $n$ bins. When $m \geq 72 (n \lambda \log n + n)$ where $\lambda$ is taken from Lemma~\ref{lem:initial-gap}, $n \geq n_0$ where $n_0$ is the smallest value of $n$ that satisfies $0.00332 n (\lambda \log n + 1) d/2^{d/2.17} \geq \log n$, and $d \geq 6$, it takes at most $d$ probes on average to place every ball on expectation and with high probability. Furthermore, for an absolute constant~$c$,
	\small
	\begin{center}
	$\prob\left(\text{Max. load of any bin} > \frac{m}{n} + \frac{\log \log n}{ 0.46d} + c \log \log \log n\right) \leq c (\log \log n)^{-4}$.
	\end{center}
	\normalsize
\end{duplicate}

Our technique for proving that the average number of probes is bounded is novel to the best of our knowledge.  As the number of probes required by each ball is dependent on the configuration of the balls-in-bins at the time the ball is placed, the naive approach to computing its expected value quickly becomes too conditional.  Instead, we show that this conditioning  can be eliminated by  carefully overcounting the number of probes required for each configuration, leading to a proof that is then quite simple.   The heavily-loaded case is significantly more complex than the $m=n$ case; however the basic ideas remain the same.

The upper bound on the maximum load is proved using the layered induction technique.  However, because \fd{d} is a dynamic algorithm, the standard recursion used in layered induction must be altered.  We use coupling and some more complex analysis to adjust the standard layered induction to this context.

We furthermore provide a tight lower bound on the maximum load for a broad class of algorithms which use variable probing.


\begin{duplicate}[Theorem~\ref{the:class1-lower-bound}]
	Let \alg{k} be any algorithm that places $m$ balls into $n$ bins, where $m \geq n$, sequentially one by one and satisfies the following conditions:
	\begin{enumerate}
		\item At most $k$ probes are used to place each ball.
		\item For each ball, each probe is made uniformly at random to one of the $n$ bins.
		\item For each ball, each probe is independent of every other probe.
	\end{enumerate}
	The maximum load of any bin after placing all $m$ balls using \alg{k} is at least $\frac{m}{n} + \frac{\ln \ln n}{\ln k} - \Theta(1)$ with high probability.
\end{duplicate}

We use the above theorem to provide a lower bound on the maximum load of \fd{d}, which is tight up to $O(\log \log \log n)$ terms.

\begin{duplicate}[Theorem~\ref{the:fd-lower-bound}]
The maximum load of any bin after placing $m$ balls into $n$ bins using \fd{d}, where maximum number of probes allowed per ball is $2^{\Theta(d)}$, is at least $\frac{m}{n} + \frac{\ln \ln n}{\Theta(d)} - \Theta(1)$ with high probability.
\end{duplicate}
~\\
\noindent {\bf Related Work.} 
Several other  algorithms in which the number of probes performed by each ball is adaptive in nature have emerged in the past, such as work done by Czumaj and Stemann~\cite{CS01} and by Berenbrink et al.~\cite{BKSS13}.

Czumaj and Stemann~\cite{CS01} present an interesting ``threshold" algorithm.  First they define a process \emph{Adaptive-Allocation-Process}, where each load value has an associated threshold and a ball is placed when the number of probes that were made to find a bin of a particular load exceeded the associated threshold. Then, by carefully selecting the thresholds for the load values, they develop \emph{M-Threshold}, where each ball probes bins until it finds one whose load is within some predetermined threshold.  The bounds on maximum load and on average allocation time are better than our algorithm's, but the trade-off is that computing the required threshold value often depends on the knowledge (typically unavailable in practical applications) of the total number of balls that will ever be placed. Furthermore their proofs are for $m=n$ and don't extend easily to when $m>n$. 

More recently, Berenbrink et al.~\cite{BKSS13} develop a new threshold algorithm, \emph{Adaptive}, which is similar to \emph{M-Threshold} but where the threshold value used for a given ball being placed depends on the number of balls placed thus far. They analyze this algorithm when $m \geq n$ and also extend the analysis of \emph{M-Threshold} from \cite{CS01} to the $m \geq n$ case. They show that both algorithms have good bounds on maximum load and average allocation time, but again this comes at the trade-off of requiring some sort of global knowledge when placing the balls. In the case of \emph{Adaptive}, each ball must know what its order in the global placement of balls is, and in the case of \emph{M-Threshold}, each ball must know the total number of balls that will ever be placed. Our algorithm is unique in that it  requires no such global knowledge at all; it is able to make decisions based on the probed bins' load values alone.
~\\

\noindent \textbf{Definitions.} 
In the course of this paper we will use several terms from probability theory, which we define below for convenience.

Consider two Markov chains $A_t$ and $B_t$ over time $t \geq 0$ with state spaces $S_1$ and $S_2$ respectively. A \textit{coupling} (cf. \cite{MU05}) of $A_t$ and $B_t$ is a Markov chain ($A_t$, $B_t$) over time $t \geq 0$ with state space $S_1 \times S_2$ such that $A_t$ and $B_t$ maintain their original transition probabilities.

Consider two vectors $u,v \in \mathbb{Z}^n$. Let $u'$ and $v'$ be permutations of $u$ and $v$ respectively such that $u'_i \geq u'_{i+1}$ and $v'_i \geq v'_{i+1}$ for all $1 \leq i \leq n-1$. We say $u$ \textit{majorizes} $v$ (or $v$ \textit{is majorized by} $u$) when
\begin{center}
	$\sum \limits_{j=1}^{i} u'_j \geq \sum \limits_{j=1}^{i} v'_j, \forall 1 \leq i \leq n$. 
\end{center}
For a given allocation algorithm $C$ which places balls into $n$ bins, we define the \textit{load vector} $u^t \in {(\mathbb{Z}^*)}^n$ of that process after $t$ balls have been placed as follows: the $i^{\text{th}}$ index of $u^t$ denotes the load of the $i^{\text{th}}$ bin (we can assume a total order on the bins according to their IDs). Note that $u^t$, $t \geq 0$, is a Markov chain.

Consider two allocation algorithms $C$ and $D$ that allocate $m$ balls. Let the load vectors for $C$ and $D$ after $t$ balls have been placed using the respective algorithms be $u^t$ and $v^t$ respectively. We say that $C$ \textit{majorizes} $D$ (or $D$ \textit{is majorized by} $C$) if there is a coupling between $C$ and $D$ such that $u^t$ majorizes $v^t$ for all $0 \leq t \leq m$.

Berenbrink et al. \cite{BCSV06} provide an illustration of the above ideas being applied in the load balancing context.
~\\

We also use Theorem~2.1 from Janson~\cite{J14} in order to achieve high probability concentration bounds on geometric random variables. We first set up the terms in the theorem and then restate it below. Let $X_1, \ldots, X_n$ be $n \geq 1$ geometric random variables with parameters $p_1, \ldots, p_n$ respectively. Define $p_* = \min_{i} p_i$, $X = \sum \limits_{i=1}^{n} X_i$, and $\mu = E[X] = \sum \limits_{i=1}^n \frac{1}{p_i}$. Now we have the following lemma.

\begin{lemma}[Theorem 2.1 in \cite{J14}] \label{lem:janson-the}
	For any $p_1, \ldots, p_n \in (0,1]$ and any $\Lambda \geq 1$, $Pr(X \geq \Lambda \mu) \leq e^{- p_*  \mu (\Lambda - 1 - \ln \Lambda)}$.
\end{lemma}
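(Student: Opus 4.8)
The plan is to prove this by the exponential–moment (Chernoff) method, assuming as in Janson that the $X_i$ are independent; the only nonroutine ingredient is the choice of the exponential parameter, which decouples the individual moment generating functions. The case $\Lambda = 1$ is trivial, since the right-hand side is then $e^0 = 1$, so assume $\Lambda > 1$. Each $X_i$ is geometric with parameter $p_i$ on $\{1,2,\ldots\}$, so $E[s^{X_i}] = p_i s/(1-(1-p_i)s)$ whenever $1 \le s < 1/(1-p_i)$; by independence and Markov's inequality applied to $s^{X}$, for any $s$ in the common domain $1 \le s < 1/(1-p_*)$ we get
\[
  Pr(X \ge \Lambda\mu) \ \le\ s^{-\Lambda\mu}\prod_{i=1}^{n}\frac{p_i s}{1-(1-p_i)s}.
\]

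The key step is to set $s = 1/(1-\theta p_*)$ for a parameter $\theta \in (0,1)$ to be fixed at the end; this lies in the required domain because $p_i \ge p_* > \theta p_*$. A one-line computation then gives $1-(1-p_i)s = (p_i - \theta p_*)s$, hence $E[s^{X_i}] = p_i/(p_i - \theta p_*) = (1-\theta p_*/p_i)^{-1}$, so each factor depends only on the ratio $\gamma_i := p_*/p_i \in (0,1]$. Taking logarithms and using $\ln(1-\theta p_*) \le -\theta p_*$ gives
\[
  \ln Pr(X \ge \Lambda\mu) \ \le\ -\,\theta p_*\,\Lambda\mu \;-\; \sum_{i=1}^{n}\ln(1-\theta\gamma_i).
\]
Since $\gamma \mapsto -\ln(1-\theta\gamma)$ is convex on $[0,1]$ and vanishes at $\gamma = 0$, we have $-\ln(1-\theta\gamma_i) \le -\gamma_i\ln(1-\theta)$; summing and using $\sum_i \gamma_i = p_*\sum_i 1/p_i = p_*\mu$ yields $\ln Pr(X \ge \Lambda\mu) \le -p_*\mu\,(\Lambda\theta + \ln(1-\theta))$.

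It remains to choose $\theta$. The function $\Lambda\theta + \ln(1-\theta)$ is maximized over $(0,1)$ at $\theta = 1-1/\Lambda$ (positive since $\Lambda > 1$), where its value is exactly $\Lambda - 1 - \ln\Lambda$; substituting gives $Pr(X \ge \Lambda\mu) \le e^{-p_*\mu(\Lambda - 1 - \ln\Lambda)}$, as required. I expect the only real obstacle to be spotting the substitution $s = 1/(1-\theta p_*)$: it is precisely what makes the per-coordinate moment generating functions collapse into functions of $p_*/p_i$, after which the convexity bound makes the product telescope into the single quantity $p_*\mu$ and the remaining optimization in $\theta$ is the familiar one producing the $\Lambda - 1 - \ln\Lambda$ rate. (Replacing the convexity step by the cruder $-\ln(1-\theta\gamma_i) \le \theta\gamma_i/(1-\theta)$ still works but only yields an exponent of order $(\sqrt\Lambda - 1)^2$, so the convexity bound is what secures the stated constant.)
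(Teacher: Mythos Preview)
Your proof is correct. Note, however, that the paper does not prove this lemma at all: it is simply quoted as Theorem~2.1 of Janson~\cite{J14} and used as a black box, so there is no ``paper's own proof'' to compare against. Your argument is essentially the standard Chernoff--Cram\'er proof (and indeed is the route Janson takes): Markov on $s^X$, the substitution $s=1/(1-\theta p_*)$ to make each factor depend only on $p_*/p_i$, the convexity bound $-\ln(1-\theta\gamma)\le -\gamma\ln(1-\theta)$ on $[0,1]$ to collapse the sum to $p_*\mu$, and then the calculus optimization $\theta=1-1/\Lambda$ yielding the rate $\Lambda-1-\ln\Lambda$.
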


\noindent \textbf{Organization of Paper.} 
The structure of this paper is as follows. In Section~\ref{sec:model}, we define the model formally and present the $\fd{d}$ algorithm. We then analyze the algorithm when $m=n$ in Section~\ref{sec:finite} and give a proof that the total number of probes used by $\fd{d}$ to place $n$ balls is $nd$ with high probability, while the maximum bin load is still upper bounded by $\frac{\log \log n}{0.66d} + O(1)$ with high probability. We provide the analysis of the algorithm when $m>n$ in Section~\ref{sec:finite-heavy}, namely that the number of probes is on average $d$ per ball with high probability and the maximum bin load is upper bounded by $\frac{m}{n} + \frac{\log \log n}{0.46d} + O(\log \log \log n)$ with probability close to $1$. We provide a matching lower bound for maximum bin load tight up to the $O(\log \log \log n)$ term for algorithms with variable number of probes and \fd{d} in particular in Section~\ref{sec:lower-bound}. In Section~\ref{sec:experiments}, we give experimental evidence that our $\fd{d}$ algorithm indeed results in a maximum load that is comparable to $\lt{d}$ when $m=n$. Finally, we provide some concluding remarks and scope for future work in Section~\ref{sec:conc}.


\section{The \fd{d}  Algorithm}
\label{sec:model}
The idea behind this algorithm is to use probes more efficiently.  In the standard $d$-choice model, effort is wasted in some phases.  For example, early on in the distribution, most bins have size 0 and there is no need to search before placing a ball.   On the other hand, more effort in other phases would lead to significant improvement.  For example, if $.9n$ balls have been distributed, most bins already have size at least 1 and thus it is harder to avoid creating a bin of size 2.  \fd{d} takes this variation into account by probing until it finds a difference, then making its decision.

This algorithm uses probes more efficiently than other, fixed-choice algorithms, while still having a balanced outcome.  Each ball probes at most $2^{\Theta(d)}$ bins (where $d \geq 6$ and by extension $2^{\Theta(d)} > 2$) uniformly at random until it has found two bins with different loads (or a bin with zero load) and places the ball in the least loaded of the probed bins (or the zero loaded bin). If all $2^{\Theta(d)}$ probed bins are equally loaded, the ball is placed (without loss of generality) in the last probed bin.  The pseudocode for \fd{d} is below. Note that we use the $\Theta()$ to hide a constant value. The exact values are different for $m=n$ and $m \gg n$ and are $2/3$ and $1/2.17$ respectively.

\small
\alglanguage{pseudocode}
\begin{algorithm}
	\caption{$\fd{d}$  }
	\label{alg:firstdiff}
	(Assume $2^{\Theta(d)} > 2$. The following algorithm is executed for each ball.)
	\begin{algorithmic}[1]
		\BlockOn {Repeat $2^{\Theta(d)}$ times}
			\State Probe a new bin chosen uniformly at random
			\If{The probed bin has zero load}
				\State Place the ball in the probed bin and exit
			\EndIf
			\If{The probed bin has load that is different from those probed before}
				\State Place the ball in the least loaded bin (breaking ties arbitrarily) and exit
			\EndIf
		\BlockOff
		\State Place the ball in the last probed bin
	\end{algorithmic}
\end{algorithm}
\normalsize

As we can see, the manner in which a ball can be placed using \fd{d} can be classified as follows:

\begin{enumerate}
	\item The first probe was made to a bin with load zero.
	\item All probes were made to bins of the same load.
	\item One or more probes were made to bins of larger load followed by a probe to a bin of lesser load.
	\item One or more probes were made to bins of lesser load followed by a probe  to a bin of larger load.
\end{enumerate}


\section{Analysis of $\fd{d}$ when $m=n$}
\label{sec:finite}

\begin{theorem}\label{the:finite-light}
	Use $\fd{d}$, where maximum number of probes allowed per ball is $2^{2d/3}$, to allocate $n$ balls into $n$ bins. The average number of probes required per ball is at most $d$ on expectation and w.h.p. Furthermore, the maximum load of any bin is at most $\frac{\log \log n}{0.66d} + O(1)$ with high probability when $d \geq 4$ and $n \geq \text{max}(2, n_0)$, where $n_0$ is the smallest value of $n$ such that for all $n > n_0$, $36 \log n \left(  \frac{72 e \log n}{5n} \right)^4 \leq \frac{1}{n^2}$.
\end{theorem}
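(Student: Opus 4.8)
The plan is to prove the two assertions separately. \emph{Average number of probes.} Fix a ball $t$ and condition on the bin configuration just before it is placed; say $n_0$ of the $n$ bins are empty. The number of probes $N_t$ made by ball $t$ is at most $1$ plus a geometric variable truncated at $2^{2d/3}-1$: once the first probe lands in a bin of load $a$, every later probe terminates the loop as soon as it hits a bin of load $\ne a$, and in particular if it hits an empty bin. Overcounting---pretending the loop can only be stopped, after probe $1$, by hitting an empty bin---gives the stochastic domination $N_t\preceq 1+\min\{\mathrm{Geom}(n_0/n),\,2^{2d/3}-1\}$, hence $\mathbb{E}[N_t\mid\mathcal{F}_{t-1}]\le\min\{n/n_0,\,2^{2d/3}\}$. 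At most $t-1$ bins are nonempty, so $n_0\ge n-t+1$ deterministically and $\mathbb{E}[N_t\mid\mathcal{F}_{t-1}]\le b_t:=\min\{n/(n-t+1),\,2^{2d/3}\}$; summing, $\sum_{t=1}^{n}b_t=\sum_{s=1}^{n}\min\{n/s,\,2^{2d/3}\}\le n\ln 2^{2d/3}+O(n)=\tfrac{2\ln 2}{3}\,dn+O(n)\le dn$ for $d\ge 4$, which is the expectation bound. For the high-probability version I would replace $\sum_t N_t$ by the stochastically larger $n+\sum_t\widehat{G}_t$ with the $\widehat{G}_t$ \emph{independent}, $\widehat{G}_t=\min\{\mathrm{Geom}((n-t+1)/n),\,2^{2d/3}-1\}$, and apply a Bernstein/Chernoff bound for bounded independent variables (or Lemma~\ref{lem:janson-the} after discarding the truncation on a negligible tail), each $\widehat{G}_t$ being bounded by $2^{2d/3}-1$.

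\emph{Maximum load.} Here I would run a layered induction on $\nu_i(t):=|\{\text{bins of load}\ge i\text{ after }t\text{ balls}\}|$, with a free base case: the total load after all $n$ balls equals $n$ and $\nu_1(n)\ge\nu_2(n)\ge\cdots$, so $\nu_i(n)\le n/i$, in particular $\nu_3(n)\le n/3$. The key computation is the one-step probability that ball $t$ is placed in a bin of load $\ge i$, conditioned on the configuration before it: such a placement occurs only if \emph{every} probe of ball $t$ lands in a bin of load $\ge i$, so conditioning on the first probe having load $a\ge i$ turns the event into a truncated-geometric calculation, and summing over $a$ and simplifying yields exactly
\[
\left(\frac{\nu_i(t-1)}{n}\right)^{\!2}-\left(1-\frac{\nu_i(t-1)}{n}\right)\sum_{a\ge i}\;\sum_{j=2}^{2^{2d/3}-1}\left(\frac{n_a(t-1)}{n}\right)^{\!j},
\]
where $n_a(t-1)$ is the number of bins of load exactly $a$. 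This quantity never exceeds $(\nu_i(t-1)/n)^2$, but when almost all of the mass $\nu_i(t-1)$ sits on load exactly $i$ (equivalently $\nu_{i+1}(t-1)\ll\nu_i(t-1)$) it collapses to $(\nu_i(t-1)/n)^{2^{2d/3}}$: the ball then effectively enjoys $2^{2d/3}$ choices.

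The induction then has two regimes. While $\nu_i(n)$ is still a constant fraction of $n$---which, starting from $\nu_3(n)\le n/3$, lasts only $O(1)$ levels---I would use the crude bound $(\nu_i(t-1)/n)^2$; its doubly-exponential recursion pushes $\nu_i(n)$ below $n/C$ for a suitable absolute constant $C$. Once $\nu_i(n)\le n/C$ I would decompose $\nu_{i+1}(n)=\sum_t\mathbf{1}[\text{ball }t\text{ placed at load exactly }i]$ into the balls that probe only bins of load $\ge i$ (contributing at most $\sum_t(\nu_i(t-1)/n)^{2^{2d/3}}$ in conditional expectation) and those that, having first probed a bin of load $i$ or of load $\ge i+1$, stop at a probe of the complementary type; a careful accounting of that \emph{stopping} probe (not a union bound over all probes) bounds the latter contribution by $O\!\big(\sum_t (n_i(t-1)/n)(\nu_{i+1}(t-1)/n)\big)=O(\beta_i\beta_{i+1}/n)$, which is absorbable into $\beta_{i+1}$ once $\beta_i\le n/C$. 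This gives the recursion $\beta_{i+1}=O\!\big(n(\beta_i/n)^{2^{2d/3}}\big)$ down to $\beta_i=\Theta(\log n)$, after which one further level has expected count $n(\Theta(\log n)/n)^{2^{2d/3}}=o(1)$ and so is empty with high probability. Counting levels---$O(1)$ in the first regime plus $\frac{\log\log n}{\log 2^{2d/3}}+O(1)=\frac{\log\log n}{(2/3)d}+O(1)$ in the second---gives maximum load at most $\frac{\log\log n}{0.66\,d}+O(1)$, after a union bound over the $O(\log\log n)$ relevant levels; the hypothesis on $n_0$ is exactly what makes the final concentration/Markov step hold with probability $1-n^{-2}$ at those levels.

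The main obstacle is the circularity of the second regime: the bound $O(\beta_i\beta_{i+1}/n)$ on the ``disagreement'' contribution is stated in terms of $\beta_{i+1}$, the very quantity being established, via $\nu_{i+1}(t-1)\le\beta_{i+1}$. I would resolve this with the standard first-violation argument: letting $\tau$ be the first time $\nu_{i+1}$ exceeds $\beta_{i+1}$, one has $\nu_{i+1}(t-1)\le\beta_{i+1}$ for every $t\le\tau$ (since $\nu_{i+1}$ grows by at most $1$ per ball), so the disagreement estimate is valid up to $\tau$, and proving $\nu_{i+1}(\tau)\le\beta_{i+1}$ with high probability forces $\Pr[\tau\le n]$ to be negligible. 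Everything else---the per-level Chernoff bounds, the union bounds, tie-breaking, and the with/without-replacement discrepancy (negligible because $2^{2d/3}\ll n$)---is routine given the one-step probability above and this coupling.
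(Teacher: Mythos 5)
Your proposal is correct and follows essentially the same route as the paper: the probe count is bounded by dominating each ball's probes with a truncated geometric that stops only on empty bins (success probability at least $(n-t+1)/n$) and then concentrating the sum, and the maximum load is proved by the same adapted layered induction in which the early-stopping contribution of order $\beta_i\beta_{i+1}$ is absorbed into $\beta_{i+1}$ once $\beta_i/n$ is a small constant, giving the $\beta_{i+1}\approx\beta_i^{k}$ recursion over $\frac{\log\log n}{\log k}+O(1)$ levels, a first-violation (stopping-time) argument at each level, and a separate treatment of the top level. Minor slips --- the dropped $+1$ probe per ball, and the omitted early-stop case where both the initial run and the stopping probe have load at least $i+1$, which is anyway bounded by $(\nu_{i+1}/n)^2\le(\nu_i/n)(\nu_{i+1}/n)$ --- are harmless.
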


\begin{proof}
	First, we  show that an upper bound on the average number of probes per ball is $d$ on expectation and w.h.p. Subsequently, we  show that the maximum load at the end of placing all $n$ balls is as desired w.h.p.
\\\\	
\subsection{Proof of Number of Probes}
\begin{lemma}\label{lem:finite-light-num-probes}
		The number of probes required to place $m=n$ balls into $n$ bins using $\fd{d}$, where maximum number of probes allowed per ball is $2^{2d/3}$, is at most $nd$ on expectation and with high probability when $d \geq 4$.
\end{lemma}
\begin{proof}
		Let $k$ be the maximum number of probes allowed to be used by \fd{d} per ball, i.e. $k = 2^{2d/3}$. We show that the total number of probes required to place all balls does not exceed $1.5 n \log k$ w.h.p. and thus $nd$ probes are required to place all balls.

		Let the balls  be indexed from 1 to $n$ in the order in which they are placed.  Our analysis proceeds in two phases. For a value of $T$ that will be fixed subsequently, the first $T+1$ balls are analyzed in the first phase and remaining balls are analyzed in the second. Consider the ball indexed by $t$, $1 \le t \le n$. Let $X_t$ be the random variable denoting the number of probes it takes for \fd{d} to place ball $t$.

		\noindent \textbf{Phase One:} $t\leq T+1$. We couple \fd{d} with the related process that probes until it finds a difference in bin loads or runs out of probes, without treating empty bins as special; in other words, the \fd{d} algorithm without lines~3 and~4. One additional rule for the related process is that if an empty bin is probed first, then after the process finishes probing, the ball will be placed in that first probed bin, i.e. the empty bin. Note that this is a valid coupling; if an empty bin is probed then under both \fd{d} and this process the ball is placed in an empty bin, and if no empty bin is probed the two processes are exactly the same. Let $Y_t$ be the number of probes required by this related process to place ball $t$ in the configuration where there are $t-1$ bins of load 1 and $n-t+1$ bins of load 0. Notice that for any configuration of balls in bins, $X_t \leq Y_t$; furthermore, the configuration after placement under both \fd{d} and this new process is the same. You can see this by a simple sequence of couplings.
		
		First, choose some arbitrary configuration with $n\alpha_i$ bins of size $i$ for $i=0, 1, 2, \ldots$.  That configuration will be probed until bins of two different sizes are discovered, i.e. until the set probed intersects two distinct $\alpha_i$ and $\alpha_j$. Couple this with the configuration that has $\sum_{i=1}^{n} n\alpha_i$ bins of size 1 and the rest of size 0.  This configuration requires more probes than the original configuration; it continues until the set probed intersects $\alpha_0$ and $\alpha_{\neq 0}$.  Second, note that the configuration with $t-1$ bins of size 1 and $n-t+1$ bins of size 0 requires even more probes than this one. This is because restricting the bins to size 1 can only decrease the number of empty bins.  Finally, note that the ball's placement under either \fd{d} or this new process leads to the same configuration at time $t+1$ (up to isomorphism); if \fd{d} places a ball in an empty bin, so does this process.
		
		We first derive the expected value of $Y_t$. The expected number of probes used by \fd{d} is upper bounded by the expected number of probes until a size-0 bin appears, i.e. the expected number of probes used by the \fd{d} algorithm without line 1. This is of course $n/(n-t+1)$.  The overall expected number of probes for the first $T+1$ steps is
		\begin{align*}
			E[Y] &= E[\sum_{t=1}^{T+1} Y_t] \\
			&= \sum_{t=1}^{T+1} E[Y_t] \\
			&\leq \sum_{t=1}^{T+1}\frac{n}{n-t+1} \\
			&=n\sum_{i=n-T}^{n}\frac{1}{i}\\
			&~\sim n(\log n-\log (n-T)) \\
			&=n\log \left(\frac{n}{n-T}\right).
		\end{align*}
		
		Now we will find $T$ such that the expected number of probes in phase one, $E[Y]$, is $n \log k$, i.e. 
		$$n\log \left(\frac{n}{n-T}\right) = n \log k.$$  Solving, we get 
		$T=n(1-1/k).$  
		Now, recall that we want a high probability bound on the number of probes required to place each ball in Phase One when running \fd{d}, i.e. $\sum_{t=0}^{n(1-1/k)} X_t$. Recall that $X_t \leq Y_t$, and as such a high probability bound on $\sum_{t=0}^{n(1-1/k)} Y_t$ suffices. We can now use Lemma~\ref{lem:janson-the} with
		$\Lambda = 1.01$, $\mu = n\log k$, and $p_* = \frac{1}{k}$.
		\begin{align*}
			Pr\left(\sum_{t=0}^{n(1-1/k)} X_t \geq 1.01 n \log k\right) &\leq Pr\left(\sum_{t=0}^{n(1-1/k)} Y_t \geq 1.01 n \log k\right)\\
			&\leq e^{-\frac{1}{k} \cdot (n \log k) \cdot (1.01 - 1 - \ln 1.01)}\\
			&\leq O\left(\frac{1}{n} \right) \text{ since } k << n 
		\end{align*}
		
		\noindent \textbf{Phase Two:} $t>T+1$.  Rather than analyzing in detail, we use the fact that the number of probes for each ball is bounded by $k$, i.e. $X_t \leq k$, $\forall t > T+1$. So the number of probes overall in this phase is at most \[k(n-T-1)=k(n-n(1-1/k)-1)=n-k.\]
		
		So the total number of probes w.h.p. is $1.01 n \log k + n - k \leq 1.5 n \log k$ (when $k = 2^{2d/3}$ and $d \geq 4$). When $k = 2^{2d/3}$, an upper bound on the number of probes to place all $n$ balls is $nd$ probes on expectation and w.h.p., as desired. 
	\end{proof}
	
	\subsection{Proof of Maximum Load} \label{maxload}
	\begin{lemma}\label{lem:finite-light-max-load}
		The maximum load in any bin after using $\fd{d}$, where maximum number of probes allowed per ball is $2^{2d/3}$, to allocate $n$ balls into $n$ bins is at most $\frac{\log \log n}{0.66d} + O(1)$ with high probability when $d \geq 4$ and  $n \geq \text{max}(2, n_0)$, where $n_0$ is the smallest value of $n$ such that for all $n > n_0$, $36 \log n \left(  \frac{72 e \log n}{5n} \right)^4 \leq \frac{1}{n^2}$.
	\end{lemma}

	\begin{proof}
		While the proof follows along the lines of the standard layered induction argument~\cite{MU05,TW14}, we have to make a few non-trivial adaptations to fit our context where the number of probes is not fixed.
		
		Let $k$ be the maximum number of probes allowed to be used by \fd{d} per ball, i.e. $k = 2^{2d/3}$. Define $v_i$ as the fraction of bins of load at least $i$ after $n$ balls are placed. Define $u_i$ as the number of balls of height at least $i$ after $n$ balls are placed. It is clear that $v_i * n \leq u_i$.
		
		We wish to show that the $\prob(\text{Max. load} \geq \frac{\log \log n}{\log k} + \gamma) \leq \frac{1}{n^c}$ for some constants $\gamma \geq 1$ and $c \geq 1$. Set $i^* = \frac{\log \log n}{\log k} + 11$ and $\gamma = 15$. Equivalently, we wish to show that $\prob(v_{i^*+4} > 0) \leq \frac{1}{n^c}$ for some constant $c \geq 1$.
		
		In order to aid us in this proof, let us define a non-increasing series of numbers $\beta_{11}, \beta_{12}, \ldots, \beta_{i^*}$ as upper bounds on $v_{11}, v_{12}, \ldots v_{i^*}$. Let us set $\beta_{11} = \frac{1}{11}$.
		
		\noindent Now,
		\begin{align}
			\prob(v_{i^*+4} > 0)
			&= \prob(v_{i^*+4} > 0 | v_{i^*} \leq \beta_{i^*}) \prob(v_{i^*} \leq \beta_{i^*}) \nonumber\\ &\hspace{2em}+ \prob(v_{i^*+4} > 0 | v_{i^*} > \beta_{i^*}) \prob(v_{i^*} > \beta_{i^*}) \nonumber \\
			&\leq \prob(v_{i^*+4} > 0 | v_{i^*} \leq \beta_{i^*}) + \prob(v_{i^*} > \beta_{i^*}) \nonumber \\
			&= \prob(v_{i^*+4} > 0 | v_{i^*} \leq \beta_{i^*}) \nonumber\\ &\hspace{2em} + \prob(v_{i^*} > \beta_{i^*} | v_{i^*-1} \leq \beta_{i^* - 1}) \prob(v_{i^* - 1} \leq \beta_{i^* - 1}) \nonumber\\& \hspace{2em}+ \prob(v_{i^*} > \beta_{i^*} | v_{i^*-1} > \beta_{i^* - 1}) \prob(v_{i^* - 1} > \beta_{i^* - 1}) \nonumber \\
			&\leq \prob(v_{i^*+4} > 0 | v_{i^*} \leq \beta_{i^*}) \nonumber \\ &\hspace{2em}+ \sum\limits_{i=12}^{i^*} \prob(v_{i} > \beta_{i} | v_{i-1} \leq \beta_{i-1}) + \prob(v_{11} > \beta_{11}) \label{eq:main-eq}
		\end{align}
		
		Here, $\prob(v_{11} > \beta_{11}) = 0$. It remains to find upper bounds for the remaining two terms in the above equation.
		
		We now derive a recursive relationship between the $\beta_i$'s for $i \geq 11$. $\beta_{i+1}$ acts as an upper bound for the fraction of bins of height at least $i+1$ after $n$ balls are placed. In order for a ball placed to land up at height at least $i+1$, one of 3 conditions must occur:
		\begin{itemize}
			\item All $k$ probes are made to bins of height at least $i$.
			\item Several probes are made to bins of height at least $i$ and one is made to a bin of height at least $i+1$.
			\item One probe is made to a bin of height at least $i$ and several probes are made to bins of height at least $i+1$.
		\end{itemize}
		
		Thus the probability that a ball is placed at height at least $i+1$, conditioning on $v_j\leq \beta_j$ for $j\leq i+1$ at that time, is
		\begin{align*}
			&\leq \beta_i^k + \beta_i\beta_{i+1}\left(1 + \beta_i + \beta_i^2 + \ldots + \beta_i^{k-2}\right) \\ &\hspace{2em}+ \beta_i \beta_{i+1} \left(1 + \beta_{i+1} + \beta_{i+1}^2 + \ldots + \beta_{i+1}^{k-2} \right) \\
			&\leq \beta_i^k + \beta_i \beta_{i+1} \left( \frac{1 - \beta_i^{k-1}}{1 - \beta_i} + \frac{1 - \beta_{i+1}^{k-1}}{1 - \beta_{i+1}} \right) \\
			&\leq \beta_i^k + \beta_{11} \beta_{i+1} \left( 2 * \frac{1}{1 - \beta_{11}}\right) \\
			&\leq \beta_i^k + \frac{2 \beta_{i+1}}{10}
		\end{align*}
		
		Let $v_{i+1}(t)$ be the fraction of bins with load at least $i+1$ after the $1\leq t\leq n$ ball is placed in a bin.
		
		Let $t^*=\min[\arg\min_t v_{i+1}(t)>\beta_{i+1}, n]$, i.e. $t^*$ is the first $t$ such that $v_{i+1}(t)>\beta_{i+1}$ or $n$ if there is no such $t$.  The probability that $t^*<n$ is bounded by the probability that a binomial random variable $B(n, \beta_i^k + \frac{2 \beta_{i+1}}{10})$ is greater than $\beta_{i+1}n$.
		
		Fix $\beta_{i+1} = \frac{10}{3} \beta_i^k \geq \frac{2 n (\beta_i^k + \frac{2 \beta_{i+1}}{10})}{n}$. Then 
		using a Chernoff bound, we can say that with high probability, $t^*=n$ or $v_{i+1} \leq \beta_{i+1}$, so long as $e^{- \frac{\left(\beta_i^k + \frac{2 \beta_{i+1}}{10}\right)}{3}} = O(\frac{1}{n^c})$ for some constant $c \geq 1$. 
		
		Now, so long as $\beta_{i+1} \geq \frac{18 \log n}{n}$, $e^{- \frac{\left(\beta_i^k + \frac{2 \beta_{i+1}}{10}\right)}{3}} = O(\frac{1}{n^c})$.
		Notice that at $i = i^*$, the value of $\beta_i$ dips below  $\frac{18 \log n}{n}$. This can be seen by solving the recurrence with $\log \beta_{11} = - \log 11$ and $\log \beta_{i+1} = \log (\frac{10}{3})  + k \log \beta_i$.
		
		\begin{align*}
			\log \beta_{i^*} &= \log \left(\frac{10}{3}\right) (1 + k + k^2 + \ldots + k^{\log_k \log n - 1}) \\&\hspace{2em} + k^{\log_k log n} (- \log 11) \\
			&= \log \left(\frac{10}{3}\right) \left(\frac{k^{\log_k \log n} - 1}{k-1}\right) - (\log n) (\log 11) \\
			&\leq (\log n) \left(\log \left(\frac{10}{3}\right) - \log 11\right) \\
			&\leq -1.7 \log n
		\end{align*}
		
		Therefore, as it is, $\beta_{i^*} \leq \frac{1}{n^{1.7}} \leq \frac{18 \log n}{n}$ when $n \geq 2$. In order to keep the value of $\beta_i$ at least at $\frac{18 \log n}{n}$, we set
		
		\begin{align}
			\beta_{i+1} = \text{max}\left(\frac{10}{3} \beta_i^k, \frac{18 \log n}{n}\right)
		\end{align}
		
		With the values of $\beta_i$ defined, we proceed to bound $\prob(v_i > \beta_i | v_{i-1} \leq \beta_{i-1}), \forall 12 \leq i \leq i^*$.
		\noindent For a given $i$,
		
		\begin{align*}
			\prob(v_i > \beta_i | v_{i-1} \leq \beta_{i-1}) & = \prob(n v_i > n \beta_i | v_{i-1} \leq \beta_{i-1}) \\
			& \leq \prob(u_i > n \beta_i | v_{i-1} \leq \beta_{i-1}) \\
		\end{align*}
		
		We upper bound the above inequality using the following idea. Let $Y_r$ be an indicator variable set to 1 when the following 2 conditions are met: (i) the $r^{\mbox{th}}$ ball placed is of height at least $i$ and (ii) $v_{i-1} \leq \beta_{i-1}$. $Y_r$ is set to 0 otherwise. Now for all $1 \leq r \leq n$, the probability that $Y_r = 1$ is upper bounded by $\beta_{i-1}^k + \frac{2 }{10} \beta_i \leq \frac{3}{10} \beta_i + \frac{2 }{10}\beta_i \leq \frac{\beta_i}{2}$. Therefore, the probability that the number of balls of height at least $i$ exceeds $\beta_i$ is upper bounded by $\prob(B(n, \frac{\beta_i}{2}) > n \beta_i)$, where $B(\cdot,\cdot)$ is a binomial random variable with given parameters.
		
		Recall the Chernoff bound, for $0 < \delta \leq 1, \prob( X \geq (1 + \delta) \mu) \leq e^{- \frac{\mu \delta^2}{3}}$, where $X$ is the sum of independent Poisson trials and $\mu$ is the expectation of $X$. If we set $\delta = 1$, then we have
		
		\begin{align*}
			\prob(v_i > \beta_i | v_{i-1} \leq \beta_{i-1}) & \leq \prob( B(n, \frac{\beta_i}{2}) > n \beta_i) \\
			& \leq e^{- \dfrac{n \cdot (\frac{\beta_i}{2})}{3}} \\
			& \leq e^{ - \dfrac{n \cdot (\frac{18 \log n}{n})}{6}}  \mbox{ (since } \beta_i \geq \frac{18 \log n}{n}, \forall i \leq i^*) \\
			& \leq \frac{1}{n^3}
		\end{align*}
		
		\noindent Thus we have
		
		\begin{align}
			& \sum\limits_{j=12}^{i^*} \prob(v_j > \beta_j | v_{j-1} \leq \beta_{j-1}) \leq \frac{\log \log n}{n^3} \nonumber \\
			\implies & \sum\limits_{j=l+1}^{i^*} \prob(v_j > \beta_j | v_{j-1} \leq \beta_{j-1}) \leq \frac{1}{2n^2}  \mbox{ (since } n \geq 2) \label{eq:middle}
		\end{align}

		\noindent Finally, we need to upper bound $\prob(v_{i^*+4} > 0 | v_{i^*} \leq \beta_{i^*})$. Consider a particular bin of load at least $i^*$. Now the probability that a ball will fall into that bin is 
		\begin{align*}
			&\leq \frac{1}{n} \cdot \left( \beta_{i^*}^{k-1} + 2 \beta_{i^*+1} \left( \frac{1}{1 - \beta_{i^*+1}^k} \right) \right) \\
			&\leq \frac{1}{n} \cdot \left( \beta_{i^*}^{k-1} + 2 \beta_{i^*+1} \left(\frac{1}{1 - \beta_{11}} \right)\right) \\
			&\leq \frac{1}{n} \cdot \left( \beta_{i^*}^{k-1} + \frac{22}{10} \beta_{i^*}\right) (\text{since } \beta_i \text{ is a non-increasing function}) \\
			&\leq \frac{1}{n} \cdot \frac{32}{10} \cdot \beta_{i^*} \mbox{ (since } k \geq 2 \mbox{ and } \beta_{i^*} \leq 1 )
		\end{align*}
		
		Now, we upper bound the probability that 4 balls fall into a given bin of load at least $i^*$ and then use a union bound over all the bins of load at least $i^*$ to show that the probability that the fraction of bins of load at least $\beta_{i^* + 4}$ exceeds 0 is negligible.
		
		First, the probability that 4 balls fall into a given bin of load at least $\beta_{i^*}$ is
		\begin{align*}
			&\leq \prob(B(n, (\frac{1}{n} \cdot \frac{32}{10} * \beta_{i^*})) \geq 4 )\\
			&\leq \binom {n}4 \left(\frac{1}{n} \cdot \frac{32}{10} \cdot \beta_{i^*} \right)^4 \\
			&\leq \left( e \cdot n \cdot \left(\frac{1}{n} \cdot \frac{32}{10} \cdot \beta_{i^*} \right)\cdot \frac{1}{4} \right)^4 \\
			&\leq \left( \frac{32}{10} \cdot  \frac{e  \beta_{i^*}}{4} \right)^4
		\end{align*}
		
		Taking the union bound across all possible $\beta_{i^*} n$ bins, we have the following inequality
		
		\begin{align}
			&\prob(v_{i^*+4} > 0 | v_{i^*} \leq \beta_{i^*}) \leq (\beta_{i^*} n) \cdot \left(  \frac{32}{10} \cdot \frac{e \beta_{i^*}}{4} \right)^4 \nonumber \\
			\implies &\prob(v_{i^*+4} > 0 | v_{i^*} \leq \beta_{i^*})  \leq (18 \log n) \cdot \left( \frac{32}{10} \cdot  \frac{18 e \log n}{4n} \right)^4 \nonumber \\
			\implies &\prob(v_{i^*+4} > 0 | v_{i^*} \leq \beta_{i^*}) \leq \frac{1}{2n^2} \mbox{ (since } n \geq n_0) \label{eq:end}
		\end{align}
		
		Putting together equations~\ref{eq:main-eq}, \ref{eq:middle}, and \ref{eq:end}, we get
		
		\begin{align*}
			\prob(v_{i^*+4} > 0 ) &\leq \frac{1}{2n^2} + \frac{1}{2n^2} \\
			&\leq \frac{1}{n^2}
		\end{align*}
		
		\noindent Thus 
		
		\begin{align*}
			\prob\left(\text{Max. Load} \geq \frac{\log \log n}{\log k} + 15\right) & = \prob(v_{i^*+4} > 0) \\
			&\leq \frac{1}{n^2}
		\end{align*} 
	\end{proof}
	
	\noindent From Lemma~\ref{lem:finite-light-num-probes} and Lemma~\ref{lem:finite-light-max-load}, we immediately arrive at Theorem~\ref{the:finite-light}.
\end{proof}


\section{Analysis of $\fd{d}$ when $m \gg n$}
\label{sec:finite-heavy}

\begin{theorem} \label{the:finite-heavy}
Use \fd{d}, where maximum number of probes allowed per ball is $2^{d/2.17}$, to allocate $m$ balls into $n$ bins. When $m \geq 72 (n \lambda \log n + n)$ where $\lambda$ is taken from Lemma~\ref{lem:initial-gap}, $n \geq n_0$ where $n_0$ is the smallest value of $n$ that satisfies $0.00332 n (\lambda \log n + 1) d/2^{d/2.17} \geq \log n$, and $d \geq 6$, it takes at most $d$ probes on average to place every ball on expectation and with high probability. Furthermore, for an absolute constant~$c$,
	\small
	\begin{center}
	$\prob\left(\text{Max. load of any bin} > \frac{m}{n} + \frac{\log \log n}{ 0.46d} + c \log \log \log n\right) \leq c (\log \log n)^{-4}$.
	\end{center}
	\normalsize
\end{theorem}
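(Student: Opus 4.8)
\emph{Overall strategy.} I would prove the probe bound and the maximum-load bound separately, in each case bootstrapping from the $m=n$ analysis together with Lemma~\ref{lem:initial-gap}, which guarantees that after a warm-up of $t_0 = \Theta(n\lambda\log n + n)$ balls the gap between the most- and least-loaded bin is $O(\lambda\log n)$ with high probability and stays that way. Note that since the hypothesis gives $m \ge 72\,t_0$, the warm-up is only a small fraction of the stream.

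\emph{Average number of probes.} The conditional expected number of probes used by a single ball in a configuration with an $\alpha_j$ fraction of bins at load $j$ is $2 + \sum_j \alpha_j^2\,\frac{1-\alpha_j^{k-2}}{1-\alpha_j}$ (with $k = 2^{d/2.17}$), which is $\Theta(k)$ precisely when the loads are concentrated on one value, so one cannot bound the probe count ball by ball. Instead I would decompose the stream of $m$ balls into consecutive \emph{epochs}, an epoch running from the moment the minimum load equals some $\ell$ until it equals $\ell+1$. Two facts drive the argument: (a) essentially every ball placed into a near-uniform configuration lands in a least-loaded probed bin (or, if all its probes coincide, raises one of them), so a configuration concentrated at level $\ell$ is driven to a near-uniform configuration at level $\ell+1$ within $\Theta(n)$ balls, whence there are $\Theta(m/n)$ epochs; and (b) within one epoch the per-ball probe count is dominated by a truncated geometric whose parameter grows as level $\ell+1$ fills, so the total probes in an epoch is the same harmonic sum $\Theta(n\log k)$ that appears in Phase One of Lemma~\ref{lem:finite-light-num-probes}. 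Multiplying, the total is $O(m\log k)$, and the choice $k = 2^{d/2.17}$ keeps this below $md$ with just enough room for the warm-up and the $\Theta(1)$-per-epoch boundary losses (this is why the exponent is $d/2.17$ rather than $d/2$). To upgrade the expectation to a high-probability statement I would apply Janson's inequality (Lemma~\ref{lem:janson-the}) to the sum of the truncated geometric probe counts with $p_\ast = 1/k$, exactly as in the $m=n$ proof. The subtlety that makes the heavily-loaded case ``more complex'' is that \fd{d} does not strictly fill level by level (a ball whose probes all land at the top of the current range jumps up), so the epochs must be defined through a coupling with a process that does fill cleanly, and Lemma~\ref{lem:initial-gap} is used to bound the discrepancy.

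\emph{Maximum load.} I would run the layered induction of Lemma~\ref{lem:finite-light-max-load}, now indexed by height above the average $m/n$, following the heavily-loaded framework of Berenbrink et al.~\cite{BCSV06}. Let $\beta_i$ be a non-increasing sequence of target upper bounds on the fraction $v_i$ of bins of load at least $\tfrac{m}{n}+i$, seeded by $\beta_0 = 1$ and, via Lemma~\ref{lem:initial-gap} together with a short-memory/coupling argument, $v_{\lambda\log n} = 0$ with high probability after warm-up. Conditioning on $v_j \le \beta_j$ for the relevant $j$, a newly placed ball reaches height $\ge i+1$ only if (i) all $k$ probes hit bins of height $\ge i$, (ii) several probes hit height $\ge i$ and then one hits height $\ge i+1$, or (iii) one probe hits height $\ge i$ and then several hit height $\ge i+1$ --- the same three-case bound used in Lemma~\ref{lem:finite-light-max-load} --- which gives the recursion $\beta_{i+1} \le \Theta(\beta_i^k) + O(\beta_i\beta_{i+1})$, i.e. $\beta_{i+1} = \Theta(\beta_i^k)$. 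Iterating, $\beta_i$ drops below the level at which the concentration step can no longer be pushed through after $i^\ast = \frac{\log\log n}{\log k} + O(1) = \frac{\log\log n}{0.46d} + O(1)$ steps; the residual additive $O(\log\log\log n)$ term and the weaker $O((\log\log n)^{-4})$ failure probability are the now-standard price for being unable to union-bound over all $m$ time steps in the heavily-loaded regime, so that one controls the process only across the $\Theta(\log\log n)$ levels and loses polynomially in $\log\log n$ near the top of the induction. As in the $m=n$ case, the fact that \fd{d} uses a random rather than a fixed number of probes is absorbed entirely into the three-case bound above.

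\emph{Main obstacle.} I expect the hardest part to be the heavily-loaded layered induction: keeping the recursion for the $\beta_i$ valid while the base level $m/n$ itself shifts as balls arrive, seeding it correctly from Lemma~\ref{lem:initial-gap}, and carefully tracking the $O(\log\log\log n)$ slack and the $(\log\log n)^{-4}$ failure probability all require the coupling and short-memory arguments of \cite{BCSV06} to be re-derived for a dynamic probe count. A secondary difficulty, on the probe side, is making the epoch decomposition rigorous enough (controlling the rare level-jumps via the bounded-gap coupling) that the constants close with $\log k = d/2.17$.
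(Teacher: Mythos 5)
Your probe-count argument is not the paper's route, and as sketched it has a real hole. The paper replaces exactly this kind of ``sweep through the levels'' accounting with a combinatorial overcounting scheme: each configuration is compared, via a per-probe-sequence coupling, to two-level \emph{canonical} configurations $C_{\ell,b}$ ($b$ bins at level $\ell$ on top of complete levels), whose probe count dominates the actual one; an injectivity lemma (each placement extends some level to a count it has never had before, so the selected canonical configurations form a set, not a multiset) then bounds the total probes by $\sum_{\ell\le m/n+\lambda\log n}\sum_{b}Y_{\ell,b}$, whose expectation is the harmonic-type sum giving $2.17\,m\log k$, with the high-probability step via Janson's inequality as you say. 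In your version the step that fails is (b): during an epoch (minimum load $=\ell$), balls whose probes all land in higher, incomplete levels can each cost up to $k$ probes, their number is not bounded by the $\Theta(n)$ balls that actually advance the epoch (only placements at level $\ell$ do), and their cost is governed by how full \emph{those} levels are, not by how full level $\ell+1$ is --- so the per-epoch total is not a single truncated-geometric harmonic sum. The ``coupling with a process that fills cleanly'' you invoke is precisely the missing construction; it is the conditioning problem the paper explicitly says defeats the naive approach (the system oscillates between even and uneven profiles), and the injective charging to canonical configurations is the idea that resolves it. Your sketch does not supply a substitute for it.

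On the maximum load, your layered-induction outline is closer, but the stated bound cannot come from a single induction seeded at gap $\lambda\log n$. The paper proves a quantitative gap-reduction lemma (Lemma~\ref{lem:reduce-gap}): if $\prob(G^t\ge L)\le\tfrac12$ then $\prob\left(G^{t+L}\ge\frac{\log\log n}{\log k}+\ell+\gamma\right)\le\prob(G^t\ge L)+16bL^3e^{-a\ell}+n^{-2}$, whose base case $\beta_\ell=\frac{1}{8L^3}$ comes from majorizing \fd{d} by \gd{2} and invoking the exponential-moment bound of \cite{PTW10} (Theorem~\ref{the:base-case}), and which only examines the last $nL$ balls. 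A single application with $L=O(\log n)$ forces $\ell=O(\log\log n)$, i.e.\ an additive $O(\log\log n)$ term; the $c\log\log\log n$ term and the $(\log\log n)^{-4}$ failure probability in the theorem arise from applying the lemma a \emph{second} time with $L=O(\log\log n)$ and $\ell=O(\log\log\log n)$, together with the stochastic monotonicity of $G^t$ in $t$ (Lemma~\ref{lem:time-reduction}, from \cite{TW14}) to cover earlier times. Your gloss that the $\log\log\log n$ slack is ``the standard price for not union-bounding over all $m$ steps'' points in the right direction, but it omits this two-stage bootstrap, which is what actually produces the claimed additive term and error probability; as written, your induction seeded from Lemma~\ref{lem:initial-gap} alone yields only a $+O(\log\log n)$ bound.
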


\begin{proof}

First we show that the average number of probes per ball is at most $d$ on expectation and w.h.p. We then show the maximum load bound holds w.h.p. 

\subsection{Proof of Number of Probes}
~\\\textbf{Remark:} The earlier version of this paper \cite{AMRU16} had a different proof in this subsection. The overall idea of overcounting the number of probes remains the same but the specific argument of how to justify and go about such an overcounting is changed and cleaner now. More specifically, we have replaced Lemmas~4.2,~4.3, and~4.4 with the argument that follows the header ``Overcounting method" and concludes at the header ``Expectation bound". Also note that Lemma~4.6 from our earlier version is no longer required due to the way we've constructed our argument.

The main difficulty with analyzing the number of probes comes from the fact that the number of probes needed for each ball depends on where each of the previous balls were placed. Intuitively, if all the previous balls were placed such that each bin has  the same number of balls, the number of probes will be $2^{d/2.17}$. On the other hand, if  a significant number of bins are at different load levels, then, the ball will be placed with very few probes.  One might hope to prove that the system always displays a variety of loads, but unfortunately, the system (as we verified experimentally) oscillates between being very evenly loaded and otherwise. Therefore, we have to take a slightly more nuanced approach that takes into account that the number of probes cycles between high (i.e. as high as $2^{d/2.17}$) when the loads are even and as low as 2 when there is more variety in the load.

\begin{lemma}\label{lem:finite-heavy-num-probes}
When $m \geq 72 (n \lambda \log n + n)$ where $\lambda$ is taken from Lemma~\ref{lem:initial-gap}, $n \geq n_0$ where $n_0$ is the smallest value of $n$ that satisfies $0.00332 n (\lambda \log n + 1) d/2^{d/2.17} \geq \log n$, and $d \geq 6$, using \fd{d}, where maximum number of probes allowed per ball is $2^{d/2.17}$, takes at most $md$ probes on expectation and with high probability to place the $m$ balls in $n$ bins.
\end{lemma}

\begin{proof}

Let the maximum number of probes allowed per ball using \fd{d} be $k$, i.e. $k = 2^{d/2.17}$. Throughout this proof, we will assume that the maximum load of any bin is at most $m/n + \lambda \log n$, which holds with high probability owing to Lemma~\ref{lem:initial-gap}. The low probability event that the maximum load  exceeds $m/n + \lambda \log n$ will contribute very little to the overall number of probes because the probability that any ball exceeds a height of $m/n + \lambda \log n$ will be an arbitrarily small inverse polynomial in $n$. Therefore, such a ball will contribute $o(1)$ probes to the overall number of probes even when we liberally account $k$ probes for each such ball (as long as $k \ll n$). Let $m$ balls be placed into $n$ bins; we assume $m \geq 72 (n \lambda \log n + n)$.

In order to prove the lemma, we proceed in three stages. In the first stage, we consider an arbitrary sequence of placing $m$ balls into the $n$ bins. We develop a method that allows us to overcount the number of probes required to place the ball at each step of this placement. In the second stage, we proceed to calculate the expected number of probes required to place the balls. Finally in the third stage, we show how to get a high probability bound on the number of probes required to place each ball.

~\\
\noindent \textbf{Overcounting method}\\
First, we couple the process of \fd{d} with a similar process where the zero bin condition to place balls is not used. In this similar process, if a zero bin is probed first, more probes are made until either a bin with a different load is probed or until all $k$ probes are made. Then, the ball is placed in the first bin probed (the zero bin). In case the first bin probed is not a zero bin, then the process acts exactly as \fd{d}. It is clear that this process will take more probes than \fd{d}  while still making the same placements as \fd{d}. Thus, any upper bounds on number of probes obtained for this process apply to \fd{d}. For the remainder of this proof, we analyze this process.

Now we describe the method we use to overcount the number of probes. Consider an arbitrary sequence of placing $m$ balls into $n$ bins. We will describe a method to associate each configuration that arises from such a placement with a ``canonical" configuration, which we define later in this proof. Each such canonical configuration requires more probes to place the ball than the actual configuration. We ensure that the mapping of actual configurations to canonical configurations is a one-to-one mapping. Thus, by counting the number of probes required to place a ball in every possible canonical configuration, we overcount the number of probes required to place every ball. 

Imagine coupling according to probe sequences.  That is, consider all $[n]^{k}\times [n]^k \times \ldots \times [n]^k=[n]^{km}$ possible sequences of probes.  Each timestep corresponds to a particular length-$k$ sequence of bin labels $[n]$, which direct the bins to be probed.  Note that the probe sequences are each equally likely, and that they fully determine the placement.

For each probe sequence $S$, let $X^S$ be the sequence of configurations generated by $S$.  Let $x^S$ be the sequence of numbers-of-probes-used at each timestep of $X^S$.  (Note that $S$ gives $k$ potential probes at each timestep, but the entries of $x^S$ are often less than $k$, as often the ball is placed without using the maximum number of probes.) For convenience, we drop the superscript $S$ subsequently and use $X$ and $x$ to refer to the vectors.

Now we give a few definitions. Consider a particular bin with balls placed in it, one on top of another. If exactly $\ell -1 $ balls are placed below a given ball, we say that ball is at height $\ell$ or level $\ell$. Balls of the same height are said to be of the same level. We say a level contains $b$ balls if there are $b$ balls at that level. A level containing $n$ balls is said to be complete. A level containing at least one ball but less than $n$ balls is said to be incomplete. For a given configuration, consider the highest complete level $\ell$. For the given configuration, we define a plateau as any level $\geq \ell$ with at least one ball at that level. Intuitively, the plateaus for the configuration are the highest complete level and any higher incomplete levels. Notice that it is possible that a given configuration may only have one plateau when there are no incomplete levels. Further, notice that for a given configuration if two plateaus exist at levels $\ell$ and $\ell+2$ with number of balls $b_1$ and $b_2$, it implies that there exists a plateau at level $\ell+1$ with number of balls in $[b_2,b_1]$.

Consider a particular configuration $X_i$ in the sequence $X$. Call its number of plateaus~$p$. Consider the plateaus in increasing order of their levels and call them $\ell_1, \ell_2, \ldots \ell_p$.  Call the number of balls at each level $\ell_i$, $b_i$.

We now define the canonical configuration $C_{\ell, b}$.  This is the configuration with no balls at level greater than $\ell$, $b$ balls at level $\ell$, and $n$ balls at every level less than $\ell$.  

We will associate each configuration $X_i$ with a set of canonical configurations $\mathcal{C}_i$. For each plateau of $X_i$ at level $\ell_j$, include the canonical configuration $C_{\ell_j, b_j}$ in the set.

Note that for any probe sequence (not just the specific sequence $S$), the number of probes utilized by $X_i$ (e.g. $x_i$ for the sequence $S$) is less than or equal to the number of probes utilized by any of the configurations in $\mathcal{C}_i$.  Therefore the expected number of probes used to place a ball in configuration $X_i$ is less than or equal to the expected number of probes used to place a ball in any of the configurations in $\mathcal{C}_i$.  We now describe a way to select one particular configuration out of $\mathcal{C}_i$ to associate with each configuration $X_i$. We choose this configuration such that the mapping between all configurations $X_i$, $1 \leq i \leq m$, and the selected canonical configurations will be a one-to-one mapping. Furthermore, we show that every selected canonical configurations will be unique from the others and thus the set of all selected canonical configurations will not be a multiset. Thus, by counting the number of probes required to place balls in every possible canonical configuration, we overcount the number of probes required to place balls in any sequence $S$.

Look at the canonical configurations associated with configurations $X_i$, $1 \leq i \leq m$ over some entire sequence.  Let $\mathcal{C}_{0}$ refer to the set of canonical configurations before any ball is placed and let it be the empty set. The set of canonical configurations $\mathcal{C}_{i-1}$ differs from $\mathcal{C}_{i}$ in at most three configurations.  Consider that $i-1$ balls have been placed and there are now $p$ plateaus with levels $\ell_i$, $1 \leq i \leq p$ and corresponding $b_i$, $1 \leq i \leq p$ values. 

\begin{itemize}
	\item If the $i^{th}$ ball is placed at level $\ell_j$, level $\ell_{j+1}$ exists ($\geq 1$ balls are present at level $\ell_{j+1}$), and $b_{j+1}\neq n-1$, then $\mathcal{C}_{i}=(\mathcal{C}_{i-1} \backslash \{C_{\ell_{j+1}, b_{j+1}}\} ) \cup \{C_{\ell_{j+1}, b_{j+1}+1}\}$. 
	\item If level $\ell_{j+1}$ exists and $b_{j+1}=n-1$, then $\mathcal{C}_{i}=(\mathcal{C}_{i-1} \backslash \{C_{\ell_j, b_j}, C_{\ell_{j+1}, b_{j+1}}\}) \cup \{C_{\ell_{j+1}, n}\}$. 
	\item If level $\ell_{j+1}$ does not exist, then $\mathcal{C}_{i} = \mathcal{C}_{i-1} \cup\{C_{\ell_{j}+1, 1}\}$.
\end{itemize}    

Notice that in every scenario, there is exactly one configuration that is added to $\mathcal{C}_{i-1}$ to get $\mathcal{C}_{i}$. We denote the newly-added configuration as the \emph{selected} canonical configuration of $\mathcal{C}_i$. 

Now, for a given sequence of configurations $X_i$, $1 \leq i \leq m$ it is clear that each selected canonical configuration is uniquely chosen. We show in the following lemma, Lemma~\ref{lem:not-multiset}, that every selected canonical configuration in the sequence is unique and different from the other selected canonical configurations. In other words, the set of selected canonical configurations will not be a multiset. Furthermore, it has earlier been established that for any configuration and one of its canonical configurations, it takes at least as many probes to place a ball in the latter as it does in the former. Thus, by calculating the number of probes it would take to place balls in all possible canonical configurations $\bigcup_{i=1}^{m} \mathcal{C}_i$, we can overcount the number of probes required to place all balls into bins.

\begin{lemma}\label{lem:not-multiset}
The set of selected canonical configurations for any sequence of configurations $X_i$, $1 \leq i \leq m$ will not be a multiset.
\end{lemma}

\begin{proof}
Consider an arbitrary sequence and within it an arbitrary configuration $X_i$ for some $1 \leq i \leq m$. To reach this configuration, a ball was placed previously in some level $\ell - 1$ and extended level $\ell$ from $b$ balls to $b+1$ balls, $0 \leq b \leq n-1$. The selected canonical configuration for $X_i$ will be $C_{\ell, b+1}$. Since balls can only be added and never deleted, once a level is extended to some $b+1$ number of balls, placing another ball can never extend that same level to $b+1$ balls. That level can only be henceforth extended to a larger number of balls up to $n$ balls. Thus a given configuration $C_{\ell,b+1}$ will never appear twice in the set of selected canonical configurations.
\end{proof}

~\\
\noindent \textbf{Expectation bound}\\
As mentioned earlier, we assume the maximum load of any bin is $m/n + \lambda \log n$. Thus, for any given sequence of placements, the final configuration never has any balls at level $m/n + \lambda \log n +1$. Thus, when calculating the number of probes taken to place all balls, we need only consider the number of probes required to place a ball in every canonical configuration $C_{\ell, b}$, $0 \leq \ell \leq m/n + \lambda \log n$, $0 \leq b \leq n-1$.

For a given canonical configuration $C_{\ell, b}$ let $Y_{\ell,b}$ be a random variable denoting the number of probes required to place the ball using \fd{d} without the zero bin condition. Let $Y$ be a random variable denoting the total number of probes required to place a ball in each of the possible canonical configurations. Thus $Y = \sum_{\ell=1}^{m/n + \lambda \log n} \sum_{b=0}^{n-1} Y_{\ell,b}$.

For a given configuration $C_{\ell,b}$, a ball is placed when either it first hits bins of level $\ell$ several times and then a bin of level $\ell-1$, it hits bins of level $\ell-1$ several times and then a bin of level $\ell$, or it makes $k$ probes. Thus, using geometric random variables, we see that $E[Y_{\ell,b}] = \min (b/(n-b) + (n-b)/b, k)$. For the first $\frac{n}{k}$ and last $\frac{n}{k} - 1$ canonical configurations for a given level, let us give away the maximum number of probes, i.e. $Y_{\ell,b} = k$ for any $\ell$ and for $0 \leq b \leq \frac{n}{k} - 1$ and $n - \frac{n}{k} +1 \leq b \leq n-1$. We now want to calculate the expected number of probes for the middle canonical configurations.


\noindent  Therefore

\begin{align*}
\sum \limits_{\ell = 0}^{\frac{m}{n} + \lambda \log n} \sum\limits_{b=\frac{n}{k}}^{n - \frac{n}{k}} E[Y_{\ell,b}] &=\left(\frac{m}{n} + \lambda \log n + 1 \right)\left( \sum\limits_{b=\frac{n}{k}}^{n - \frac{n}{k}} \frac{n-b}{b} + \sum\limits_{b=\frac{n}{k}}^{n - \frac{n}{k}} \frac{b}{n-b} \right)\\
&= \left(\frac{m}{n} + \lambda \log n + 1 \right) \left(n \sum\limits_{b=\frac{n}{k}}^{n - \frac{n}{k}} \frac{1}{b} - \sum\limits_{b=\frac{n}{k}}^{n - \frac{n}{k}} \frac{b}{b} + n \sum\limits_{b=\frac{n}{k}}^{n - \frac{n}{k}} \frac{1}{n-b} - \sum\limits_{b=\frac{n}{k}}^{n - \frac{n}{k}} \frac{n-b}{n-b} \right)\\
&= \left(\frac{m}{n} + \lambda \log n + 1 \right) \left(n \sum\limits_{b=\frac{n}{k}}^{n - \frac{n}{k}} \frac{1}{b} + n \sum\limits_{y = n - \frac{n}{k}}^{\frac{n}{k}} \frac{1}{y}  - 2n + \frac{4n}{k} \right)\\
&= \left(\frac{m}{n} + \lambda \log n + 1 \right) \left( 2n \left(\sum\limits_{b=\frac{n}{k}}^{n - \frac{n}{k}} \frac{1}{b} -1 + \frac{2}{k}\right) \right)\\
&\approx  2 \left(m + n\lambda \log n + n \right)\left(\log \left(n - \frac{n}{k}\right) - \log \left(\frac{n}{k}\right) -1 + \frac{2}{k} \right)\\
&=  2 \left(m + n \lambda \log n + n \right) \left( \log (k-1)-1 + \frac{2}{k} \right)
\end{align*}

~\\
\noindent \textbf{High probability bound}\\
Now, we may apply Lemma~\ref{lem:janson-the} with $\Lambda = 1.01$, $\mu$ taken from above, and $p_* = \frac{1}{k}$.

\begin{align*}
&Pr\left(\sum \limits_{\ell = 0}^{\frac{m}{n} + \lambda \log n} \sum \limits_{b=\frac{n}{k}}^{n-\frac{n}{k}} Y_{\ell,b} > 2.02 (m + n \lambda \log n + n)(\log (k-1)-1 + \frac{2}{k})\right) \\
&\leq e^{-\frac{1}{k} \cdot 2 (m + n \lambda \log n + n)(\log (k-1)-1 + \frac{2}{k}) \cdot (1.01 - 1 - \ln 1.01)} \\
&\leq O\left(\frac{1}{n}\right) \text{ (since } k > 2 \text{ and } n \geq n_0)  
\end{align*}

 Therefore, with high probability, the total number of probes 

\begin{align*}
Y &= \sum \limits_{\ell=0}^{\frac{m}{n} + \lambda \log n} \sum \limits_{b=0}^{n-1} Y_{\ell,b} \\
&\leq \sum \limits_{\ell=0}^{\frac{m}{n} + \lambda \log n}\sum \limits_{b=0}^{\frac{n}{k}-1} Y_{\ell,b} + \sum \limits_{\ell=0}^{\frac{m}{n} + \lambda \log n}\sum \limits_{b=\frac{n}{k}}^{n - \frac{n}{k}} (Y_{\ell,b})  + \sum \limits_{\ell=0}^{\frac{m}{n} + \lambda \log n}\sum \limits_{b=n - \frac{n}{k} + 1}^{n-1} Y_{\ell,b} \\
&\leq \left( \frac{m}{n} + \lambda \log n + 1 \right) k\frac{n}{k} + 2.02 \left(m + n \lambda \log n + n \right) \left( \log (k-1)-1 + \frac{2}{k} \right) + \left( \frac{m}{n} + \lambda \log n + 1 \right)k \left(\frac{n}{k} - 1\right)\\
&\leq 2.14 (m + n \lambda \log n + n)\log k \\
&\leq 2.17 m \log k \text{ (since } m \geq 72 (n \lambda \log n + n) \text{)}
\end{align*}


\noindent Thus when $m$ balls are placed into $n$ bins, an upper bound on both the expected total probes and the total probes with high probability is $2.17 m \log k$. Therefore on expectation and with high probability, the number of probes per ball is at most $d$ since $k = 2^{d/2.17}$.
\end{proof}

\subsection{Proof of Maximum Load}
\begin{lemma}\label{lem:finite-heavy-max-load}
Use \fd{d}, where maximum number of probes allowed per ball is $2^{d/2.17}$, to allocate $m$ balls into $n$ bins. For any $m$, for an absolute constant $c$,
\small
\begin{center}
$\prob\left(\text{Max. load of any bin} > \frac{m}{n} + \frac{\log \log n}{ 0.46d} + c \log \log \log n\right) \leq c (\log \log n)^{-4}$.
\end{center}
\normalsize
\end{lemma}

\begin{proof}
This proof follows along the lines of that of Theorem 2 from \cite{TW14}.  In order to prove Lemma~\ref{lem:finite-heavy-max-load}, we make use of a theorem from \cite{PTW10} which gives us an initial, loose, bound on the gap $G^t$ between the maximum load and average load for an arbitrary $m$. We then use a lemma to tighten this gap. We use one final lemma to show that if this bound on the gap holds after all $m$ balls are placed, then it will hold at any time prior to that.

First, we establish some notation. Let $k$ be the maximum number of probes permitted to be made per ball by \fd{d}, i.e. $k = 2^{d/2.17}$. After placing $nt$ balls, let us define the \emph{load vector} $X^t$ as representing the difference between the load of each bin and the average load (as in \cite{BCSV06,PTW10}). Without loss of generality we order the individual values of the vector in non-increasing order of load difference, i.e. $X_1^t \geq X_2^t \geq \ldots \geq X_n^t$. So $X_i^t$ is the load in the $i^{\mbox{th}}$ most loaded bin minus $t$. For convenience, denote $X_1^t$ (i.e. the gap between the heaviest load and the average) as $G^t$.\\

\noindent \textbf{Initial bound on gap}\\
We now give an upper bound for the gap between the maximum loaded bin and the average load after placing some arbitrary number of balls $nt$. In other words, we show $\prob(G^t \geq x)$ is negligible for some $x$.  This $x$ will be our initial bound on the gap $G^t$.

\begin{lemma}\label{lem:initial-gap}
For arbitrary constant $c$, after placing an arbitrary $nt$ balls into bins under \fd{d}, there exist some constants $a$ and $b$ such that $\prob(G^t \geq \frac{c \log n}{a}) \leq \frac{bn}{n^c}$. Thus there exists a constant $\lambda$ that gives $\prob(G^t \geq \lambda \log n) \leq \frac{1}{n^c}$ for a desired $c$ value. 
\end{lemma}
In order to prove Lemma \ref{lem:initial-gap}, we need two additional facts.  The first is the following basic observation:  
\begin{lemma}\label{lem:fd-maj}
\fd{d} is majorized by \gd{2} when $d \geq 2$.
\end{lemma}

\begin{proof}
Let the load vectors for \fd{d} and \gd{2} after $t$ balls have been placed using the respective algorithms be $u^t$ and $v^t$ respectively. Now we follow the standard coupling argument (refer to Section 5 in \cite{BCSV06} for an example). Couple \fd{d} with \gd{2} by letting the bins probed by \gd{2} be the first $2$ bins probed by \fd{d}. We know that \fd{d} makes at least 2 probes when $d \geq 2$. It is clear that \fd{d} will always place a ball in a bin with load less than or equal to that of the bin chosen by \gd{2}. This ensures that if majorization was preserved prior to the placement of the ball, then the new load vectors will continue to preserve majorization; again, see \cite{BCSV06} for a detailed example. Initially, $u^0$ is majorized by $v^0$ since both vectors are the same. Using induction, it can be seen that if $u^t$ is majorized by $v^t$ at the time the $t^{\text{th}}$ ball was placed, it would continue to be majorized at time $t+1$, $0 \leq t \leq m-1$. 
Therefore, \fd{d} is majorized by \gd{2} when $d \geq 2$.
\end{proof}

The other fact is the following theorem about \gd{d} taken from \cite{PTW10} (used similarly in \cite{TW14} as Theorem~3).

\begin{theorem}\cite{PTW10}\label{the:base-case} 
Let $Y^t$ be the load vector generated by \gd{d}.  Then for every $d>1$ there exist positive constants $a$ and $b$ such that for all $n$ and all $t$, 
\begin{center}
$E\left( \sum\limits_i e^{a |Y_i^t|} \right) \leq bn$.
\end{center}
\end{theorem}

\noindent We are now ready to prove Lemma \ref{lem:initial-gap}.
\begin{proof}[Proof of Lemma~\ref{lem:initial-gap}.]
 Combining Lemma \ref{lem:fd-maj} with Theorem \ref{the:base-case} tells us that, if $X^t$ is the load vector generated by \fd{d}, $$E\left( \sum\limits_i e^{a |X_i^t|} \right) \leq bn.$$
Clearly, $\prob(G^t \geq \frac{c \log n}{a}) = \prob(e^{a G^t} \geq n^c)$. Observe that $\sum\limits_i e^{a |X_i^t|}\geq e^{aG^t}$.  Then
\begin{align*}
\prob(G^t \geq \frac{c \log n}{a}) &= \prob(e^{a G^t} \geq n^c) \\
  &\leq \frac{E[e^{a G^t}]}{n^c} \mbox{ (by Markov's inequality)}\\
  &\leq \frac{bn}{n^c} \mbox{ (by Theorem~\ref{the:base-case} and Lemma \ref{lem:fd-maj})}
\end{align*}
 and the theorem is proved.
\end{proof}

\noindent \textbf{Reducing the gap}\\
Lemma \ref{lem:initial-gap} gives an initial bound on $G^t$ of order $\log n$.  The next step is to reduce it to our desired gap value. For this reduction, we use a modified version of Lemma 2 from \cite{TW14}, with a similar but more involved proof. We now give the modified lemma and prove it. 

\begin{lemma}\label{lem:reduce-gap}
For every $k$, there exists a universal constant $\gamma$ such that the following holds: for any $t,\ell, L$ such that $1 \leq \ell \leq L \leq n^{\frac{1}{4}}$, $L = \Omega(\log \log n)$ and $\prob(G^t \geq L) \leq \frac{1}{2}$,
\begin{center}
$\prob(G^{t+L} \geq \frac{\log \log n}{\log k} + \ell + \gamma) \leq \prob(G^t \geq L) + \frac{16 b L^3}{e^{a\ell}} + \frac{1}{n^2}$,
\end{center}
where $a$ and $b$ are the constants from Theorem~\ref{the:base-case}.
\end{lemma}

\begin{proof}
This proof consists of many steps.  We first observe that Lemma~\ref{lem:reduce-gap} follows directly from  Lemma~\ref{lem:initial-gap} for sufficiently small $n$.  We then 
use layered induction to bound the proportion of bins of each size for larger $n$.  This in turn allows us to compute our desired bound on the probability of a large gap occurring.\\

\noindent \textbf{Proof of Lemma~\ref{lem:reduce-gap} for smaller values of $n$}\\
Define $n_1$ to be the minimum value of $n$ such that $L \geq 2$ (recall $L = \Omega(\log \log n)$). Define $n_2$ to be the minimum value of $n$ such that\\ $ (18 \log n) * \left( \frac{18 e n^{\frac{1}{4}} \log n}{n} \right)^4 \leq \frac{1}{2n^2}$. Define $n_3$ to be the minimum value of $n$ such that $n \geq 54 \log n$. Define absolute constant $n_0 = max(n_1,n_2, n_3)$. \\
Notice that, when $n \leq n_0$, Lemma~\ref{lem:initial-gap} implies that Lemma~\ref{lem:reduce-gap} holds with $\gamma = O(\log n_0)$.
If $n\leq n_0$, then $$\prob(G^{t+\ell}\geq \log \log n +\ell+\gamma) \leq \prob(G^{t+L} \geq \gamma).$$  Consider the right hand side of Lemma~\ref{lem:reduce-gap}.  $$\prob(G^t\geq L)+\frac{16bL^3}{\exp(a\ell)}+\frac{1}{n^2}\geq n^{-2},$$ so it will be sufficient to prove the inequality $$\prob(G^{t+L} \geq \gamma) \leq n^{-2}.$$
Since there are no conditions on $t$ in Lemma~\ref{lem:initial-gap}, we may rewrite it as  $$\prob(G^{t+L}\geq \lambda \log n) \leq n^{-c}.$$ Let $c=2$ and compute the constant $\lambda$ accordingly.  Set $\gamma = \lambda \log n_0\geq \lambda \log n$.  Then $$n^{-2}\geq \prob(G^{t+L} \geq \lambda \log n)\geq \prob(G^{t+L} \geq \gamma),$$ and we are done.\\

\noindent \textbf{Rewriting initial probability inequality}\\
We now prove Lemma~\ref{lem:reduce-gap} assuming $n > n_0$. Start by rewriting the probability in terms of $\prob(G^t\geq L)$.

\begin{align*}
 \prob(G^{t+L}\geq \frac{\log \log n}{\log k} + \ell + \gamma) &= \prob(G^{t+L} \geq \frac{\log \log n}{\log k} + \ell + \gamma | G^t \geq L) \prob(G^t \geq L)\\&\hspace{2em} + \prob(G^{t+L} \geq \frac{\log \log n}{\log k} + \ell + \gamma | G^t < L) \prob(G^t < L)\\
  &\leq \prob(G^t \geq L) + \prob(G^{t+L} \geq \frac{\log \log n}{\log k} + \ell + \gamma | G^t < L)
\end{align*}

To prove the theorem, then, it is enough to to show that \\$\prob(G^{t+L} \geq \frac{\log \log n}{\log k} + \ell + \gamma | G^t < L) \leq \frac{16 b L^3}{e^{a\ell}} + \frac{1}{n^2}$.\\\\\\
\noindent \textbf{Bins' loads}\\
Define $v_i$ to be the fraction of bins of load at least $t+L+i$ after $(t+L)n$ balls are placed. Let us set $i^* = \frac{\log \log n}{\log k} + \ell$ and set $\gamma = 4$. Using this new notation, we want to show that $\prob(G^{t+L} \geq i^* + 4| G^t<L)$ is negligible. This can be thought of as showing that the probability of the fraction of bins of load at least $t+ L + i^* + 4$ exceeding $0$ after $(t+L)n$ balls are placed, conditioned on the event that $G^t<L$, is negligible. 

Suppose we have a non-increasing series of numbers $\beta_0, \beta_1, \ldots, \beta_i, \ldots$ that are upper bounds for $v_0, v_1, \ldots, v_i, \ldots$.  Then we know that
\begin{align*}
\prob(v_{i^*+4} > 0) &= \prob(v_{i^*+4} > 0 | v_{i^*} \leq \beta_{i^*}) \prob(v_{i^*} \leq \beta_{i^*})  + \prob(v_{i^*+4} > 0 | v_{i^*} > \beta_{i^*}) \prob(v_{i^*} > \beta_{i^*}) \\
  &\leq \prob(v_{i^*+4} > 0 | v_{i^*} \leq \beta_{i^*}) + \prob(v_{i^*} > \beta_{i^*}) \\
  &\leq \prob(v_{i^*+4} > 0 | v_{i^*} \leq \beta_{i^*}) + \sum\limits_{j=\ell+1}^{i^*} \prob(v_j > \beta_j | v_{j-1} \leq \beta_{j-1}) + \prob(v_{\ell} > \beta_{\ell}) \text{ (successively expanding} \\ &\hspace{1em} \text{ and bounding the } \prob(v_{i^ *} > \beta_{i^*}) \text{ term and its derivatives)} 
\end{align*}

\noindent Conditioning both sides on $G^t<L$, we have
\begin{align}\label{eq:heavy-main-eq}
\prob(v_{i^*+4} > 0 | G^t<L) &  \leq \prob(v_{i^*+4} > 0 | v_{i^*} \leq \beta_{i^*}, G^t<L)  \nonumber \\
 & \hspace{2em}+ \sum\limits_{i=\ell+1}^{i^*} \prob(v_i > \beta_i | v_{i-1} \leq \beta_{i-1}, G^t<L) \nonumber\\
 & \hspace{2em}+ \prob(v_{\ell} > \beta_{\ell} | G^t<L)
\end{align}
It remains to find appropriate $\beta_i$ values.  We use a layered induction approach to show that $v_i$'s don't exceed the corresponding $\beta_i$'s with high probability.  This then allows us to upper bound each of the 3 components of equation~\ref{eq:heavy-main-eq}. \\

\noindent \textbf{Base case of layered induction}\\
In order to use layered induction, we need a base case. Let us set $\beta_{\ell} = \frac{1}{8L^3}$, for the $\ell$ in the statement of the theorem. Now,
\begin{align*}
\prob(v_{\ell} > \beta_{\ell} | G^t<L) & = \frac{\prob((v_{\ell} > \frac{1}{8 L^3}) \bigcap (G^t<L))}{\prob(G^t<L)} \\
  &\leq 2 \cdot \prob(v_{\ell} > \frac{1}{8 L^3}) \mbox{ (since, by the statement of the theorem } \prob(G^t<L) \geq \frac{1}{2}) \\
  &\leq 2 \cdot \frac{8bL^3}{e^{a\ell}} \mbox{ (applying Markov's inequality and using Theorem~\ref{the:base-case}) } \\
  &\leq \frac{16 b L^3}{e^{a\ell}}
\end{align*}

\noindent Therefore we have the third term of Equation~\ref{eq:heavy-main-eq} bounded: 
\begin{equation}\label{eq:beginning}
\prob(v_{\ell} > \beta_{\ell} | G^t<L) \leq \frac{16 b L^3}{e^{a\ell}}
\end{equation}

\noindent \textbf{Recurrence relation for layered induction}\\
We now define the remaining $\beta_i$ values recursively. Note that for all $i \geq \ell$, $\beta_i \leq \beta_{\ell}$. Let $u_i$ be defined as the number of balls of height at least $t+L+i$ after $(L+t)n$ balls are placed. 

Initially there were $nt$ balls in the system. Then we threw another $nL$ balls into the system. Remember that $t+L$ is the average load of a bin after $nL$ balls are further placed. Because we condition on $G^t<L$, we have it that any ball of height $i$, $i \geq 1$, must have been one of the $nL$ balls placed. 

Therefore the number of bins of load $t+L+i + 1$ after $(t+L)n$ balls are placed is upper bounded by the number of balls of height at least $t+L+i+1$. So $v_{i+1} n \leq u_{i+1}$. In order to upper bound $v_{i+1}$, we can upper bound $u_{i+1}$. 

Recall the algorithm places a ball in a bin of load $t+L+i$ if it probes $k$ times and sees a bin of load $t+L+i$ each time; or if it probes $j<k$ times and sees a bin of load $t+L+i$ each time, then probes a bin of load $\geq t+L+i+1$; or if it probes $j<k$ times and sees a bin of load at least $t+L+i+1$ each time (where the load of the bin probed each time is the same), then probes a bin of load $t+L+i$.  Thus the probability that a ball will end up at height at least $t+L+i+1$ is

\begin{align*}
&\leq \beta_i^k + \beta_i\beta_{i+1}\left(1 + \beta_i + \beta_i^2 + \ldots + \beta_i^{k-2}\right) + \beta_i \beta_{i+1} \left(1 + \beta_{i+1} + \beta_{i+1}^2 + \ldots + \beta_{i+1}^{k-2} \right) \\
&\leq \beta_i^k + \beta_i \beta_{i+1} \left( \frac{1 - \beta_i^{k-1}}{1 - \beta_i} + \frac{1 - \beta_{i+1}^{k-1}}{1 - \beta_{i+1}} \right) \\
&\leq \beta_i^k + \beta_l \beta_{i+1} \left( 2 * \frac{1}{1 - \beta_l}\right) \\
&\leq \beta_i^k + \frac{2 \beta_{i+1}}{8 L^3 - 1}
\end{align*}

Let $v_{i+1}(f)$ be the fraction of bins with load at least $t+i+1$ after the $tn + f^{\text{th}}$
, $1\leq f\leq nL$, ball is placed in a bin.

Let $f^*=\min[\arg\min_f v_{i+1}(f)>\beta_{i+1}, nL]$, i.e. $f^*$ is the first $f$ such that $v_{i+1}(f)>\beta_{i+1}$ or $nL$ if there is no such $f$. By our preceding argument, the probability that $f^*<nL$ is bounded by the probability that a binomial
random variable $B(nL, \leq \beta_i^k + \frac{2 \beta_{i+1}}{8 L^3 - 1})$ is greater than $\beta_{i+1}nL$.

Fix $$\beta_{i+1} = 2L \frac{8L^3 - 1}{8L^3 - 4L -1} \beta_i^k \geq \frac{2 nL (\beta_i^k + \frac{2 \beta_{i+1}}{8 L^3 - 1})}{n}.$$ 
 Then using a Chernoff bound, we can say that with high probability, $f^*=nL$ or $v_{i+1} \leq \beta_{i+1}$, so long as $e^{- \frac{\left(\beta_i^k + \frac{2 \beta_{i+1}}{8 L^3 - 1}\right)}{3}} = O(\frac{1}{n^c})$ for some constant $c \geq 1$. 

Now, so long as $\beta_{i+1} \geq \frac{18 \log n}{n}$, $e^{- \frac{\left(\beta_i^k + \frac{2 \beta_{i+1}}{8 L^3 - 1}\right)}{3}} = O(\frac{1}{n^c})$.   In other words, this upper bound holds for the placement of all $nt+nL$ balls.\\ 

We now show that according to the previous recurrence relation, $\beta_{i^*}$ dips below $\frac{18 \log n}{n}$. We later propose a modified recurrence relation which sets the value of $\beta_i$ to the maximum of the value of obtained from the recurrence and $\frac{18 \log n}{n}$. This ensures that $\beta_{i^*} = \frac{18 \log n}{n}$. This upper bound will be used later in the argument. We have, from the value of $\beta_{\ell}$ and the above discussion, 
\begin{center}
$\log \beta_{\ell} = - 3 \log (2L)$ and\\
$\log \beta_{i+1} = k \log \beta_i + \log(2L) + \log (\frac{8L^3 - 1}{8L^3 - 4L -1})$
\end{center}

\noindent Solving the recursion for $\log \beta_{\ell + \log \log n}$, we get 
\begin{align*}
\log \beta_{\ell + \log \log n} &= \frac{k^{\log \log n} - 1}{k-1} \log \left(\frac{2L (8L^3 - 1)}{8L^3 - 4L -1}\right)  - 3 k^{\log\log n} \log (2L) \\
 &\leq k^{\log \log n}\left( (-3k + 4) \log(2L) + \log \left( \frac{8L^3 - 1}{8L^3 - 4L - 1} \right) \right)\\
 &\leq k^{\log \log n}\left( (-6 + 4) \log(2L) + \log \left( \frac{8L^3 - 1}{8L^3 - 4L - 1} \right) \right) \\
 &\leq k^{\log \log n}\left( (-1.5) \log(2L) \right) \mbox{ (when } L \geq 2) \\
 &\leq 2^{\log \log n}\left( (-1.5) \log(2L) \right) \\
 &\leq (-1.5) (\log n) 
\end{align*}

\noindent Therefore, $\beta_{i^*} \leq n^{-1.5}$, when $L \geq 2$. Since $n \geq n_1$, we have $L \geq 2$.  Thus $\beta_{i^*} <\frac{18 \log n}{n}$, as desired. 

 Now, we need to bound $\prob(v_i > \beta_i | v_{i-1} \leq \beta_{i-1}, G^t<L)$ for all $i$'s from $\ell+1$ to $i^*$. Let us set $\beta_{i+1} = max(2L \frac{8L^3 - 1}{8L^3 - 4L -1} \beta_i^k, \frac{18 \log n}{n})$.

Using the values of $\beta_i$ generated above, we prove that for all $i$ such that $\ell+1 \leq i \leq i^*$, $\prob(v_i > \beta_i | v_{i-1} \leq \beta_{i-1}, G^t<L) \leq \frac{1}{n^3}$.
\\\\
\noindent For a given $i$,
\begin{align*}
\prob(v_i > \beta_i | v_{i-1} \leq \beta_{i-1}, G^t < L) & = \prob(n v_i > n \beta_i | v_{i-1} \leq \beta_{i-1}, G^t < L) \\
& \leq \prob(u_i > n \beta_i | v_{i-1} \leq \beta_{i-1}, G^t < L) \\
\end{align*}

We now upper bound the above inequality using the following idea. Let $Y_r$ be an indicator variable set to 1 when all three of the following conditions are met: (i) the $nt + r^{\mbox{th}}$ ball placed is of height at least $t + L + i$, (ii) $v_{i-1} \leq \beta_{i-1}$ and (iii) $G^t<L$. $Y_r$ is set to 0 otherwise. Now for all $1 \leq r \leq nL$, the probability that $Y_r = 1$ is upper bounded by $\beta_i^k + \frac{2 \beta_{i+1}}{8 L^3 - 1} \leq \frac{8L^3 - 1}{8L^3 - 4L -1} \beta_{i-1}^k \leq \frac{\beta_i}{2L}$. Since we condition on $G^t < L$, the number of balls of height at least $t + L$ or more come only from the $nL$ balls placed. Therefore, the probability that the number of balls of height at least $n + L + i$ exceeds $\beta_i$ is upper bounded by $\prob(B(nL, \frac{\beta_i}{2L}) > \beta_i)$, where $B(.,.)$ is a binomial
random variable with given parameters. 

According to Chernoff's bound, for $0 < \delta \leq 1, \prob( X \geq (1 + \delta) \mu) \leq e^{- \frac{\mu \delta^2}{3}}$, where $X$ is the sum of independent Poisson trials and $\mu$ is the expectation of $X$. If we set $\delta = 1$, then we have

\begin{align*}
\prob(v_i > \beta_i | v_{i-1} & \leq \beta_{i-1}, G^t < L)\\
 & \leq \prob( B(nL, \frac{\beta_i}{2L}) > \beta_i ) \\
& \leq e^{- \dfrac{n * (\frac{\beta_i}{2})}{3}} \\
& \leq e^{ - \dfrac{n * (\frac{18 \log n}{n})}{6}} \mbox{ (since } \beta_i \geq \frac{18 \log n}{n}, \forall i \leq i^*) \\
& \leq \frac{1}{n^3}
\end{align*}

\noindent Thus we bound the middle term in Equation~\ref{eq:heavy-main-eq}
\small
\begin{align}
& \sum\limits_{j=\ell+1}^{i^*} \prob(v_j > \beta_j | v_{j-1} \leq \beta_{j-1}, G^t < L) \leq \frac{\log \log n}{n^3} \nonumber\\
\implies & \sum\limits_{j=\ell+1}^{i^*} \prob(v_j > \beta_j | v_{j-1} \leq \beta_{j-1}, G^t < L) \leq \frac{1}{2n^2} \mbox{ (since } n \geq n_1) \label{eq:heavy-middle}
\end{align}
\normalsize
\noindent \textbf{Top layers of layered induction}\\
Finally, we need to upper bound the first term in Equation~\ref{eq:heavy-main-eq},\\ $\prob(v_{i^*+4} > 0 | v_{i^*} \leq \beta_{i^*}, G^t<L)$. Consider a bin of load at least $i^*$. We will upper bound the probability that a ball falls into this specific bin. Regardless of how the probes are made for that ball, one of them must be made to that specific bin. Thus we have a formula similar to our original recursion, but with a factor of $1/n$. 

Therefore the probability that a ball will fall into that bin is 
\begin{align*}
&\leq \frac{1}{n}\beta_i^{k-1} + \frac{1}{n}\beta_{i^*+1}\left(1 + \beta_{i^*} + \beta_{i^*} + \ldots + \beta_{i^*}^{k-2}\right) + \frac{1}{n} \beta_{i^*+1} \left(1 + \beta_{i^*+1} + \beta_{i^*+1}^2 + \ldots + \beta_{i^*+1}^{k-2} \right) \\
&\leq \frac{1}{n} \cdot \left( \beta_{i^*}^{k-1} + 2 \beta_{i^*+1} \left( \frac{1}{1 - \beta_{i^*+1}} \right) \right) \\
&\leq \frac{1}{n} \cdot \left( \beta_{i^*}^{k-1} + 2 \beta_{i^*} \left(\frac{1}{1 - \beta_{i^*}} \right)\right) \\
&\leq \frac{1}{n} \cdot \left( \beta_{i^*}^{k-1} + \frac{2n}{n-18 \log n} \beta_{i^*}\right)   \\
&\leq \frac{1}{n} \cdot \frac{3n - 18 \log n}{n - 18 \log n} \cdot \beta_{i^*} \mbox{ (since } k \geq 2 \mbox{ and } \beta_{i^*} \leq 1 ) \\
&\leq \frac{4}{n} \cdot \beta_{i^*} \text{ (since } n > n_3) \\
\end{align*}

Now, we upper bound the probability that 4 balls fall into a given bin of load at least $\beta_{i^*}$ and then use a union bound over all the bins of height at least $\beta_{i^*}$ to show that the probability that the fraction of bins of load at least $\beta_{i^* + 4}$ exceeds 0 is negligible.

First, the probability that 4 balls fall into a given bin of load at least $\beta_{i^*}$ is
\begin{align*}
&\leq \prob(B(nL, (\frac{4}{n} \cdot \beta_{i^*})) \geq 4 )\\
&\leq \binom {nL}4 \left(\frac{4}{n} \cdot \beta_{i^*} \right)^4 \\
&\leq \left( e \cdot nL \cdot (\frac{4}{n} \cdot \beta_{i^*} )\cdot \frac{1}{4} \right)^4 \\
&\leq \left( e L \beta_{i^*} \right)^4
\end{align*}

Taking the union bound across all possible $\beta_{i^*} n$ bins, we have the following inequality
\begin{align}
&\prob(v_{i^*+4} > 0 | v_{i^*} \leq \beta_{i^*}, G^t<L) \leq (\beta_{i^*} n) \cdot \left( e L \beta_{i^*}  \right)^4 \nonumber \\
\implies& \prob(v_{i^*+4} > 0 | v_{i^*} \leq \beta_{i^*}, G^t<L) \leq (18 \log n) \cdot \left( \frac{18 e L \log n}{n} \right)^4 \nonumber \\
\implies &\prob(v_{i^*+4} > 0 | v_{i^*} \leq \beta_{i^*}, G^t<L) \leq \frac{1}{2n^2} \mbox{ (since } n \geq n_2) \label{eq:heavy-end}
\end{align}

Putting together equations~\ref{eq:heavy-main-eq}, \ref{eq:beginning}, \ref{eq:heavy-middle}, and \ref{eq:heavy-end}, we get

\begin{align*}
\prob(v_{i^*+4} > 0 | G^t<L) &\leq \frac{16 b L^3}{e^{a\ell}} + \frac{1}{2n^2} + \frac{1}{2n^2} \\
&\leq \frac{16 b L^3}{e^{a\ell}} + \frac{1}{n^2}
\end{align*}

\noindent Thus 

\begin{align*}
\prob(G^{t+L} \geq \frac{\log \log n}{\log k} + \ell + 4 | G^t < L) & = \prob(v_{i^*+4} > 0 | G^t<L) \\
 &\leq \frac{16 b L^3}{e^{a\ell}} + \frac{1}{n^2}
\end{align*}

\noindent Finally
\begin{align*}
\prob(G^{t+L}\geq \frac{\log \log n}{\log k} + \ell + 4) &\leq \prob(G^t \geq L) + \prob(G^{t+L} \geq \frac{\log \log n}{\log k} + \ell + 4 | G^t < L) \\
&\leq \prob(G^t \geq L) + \frac{16 b L^3}{e^{a\ell}} + \frac{1}{n^2}
\end{align*}

\noindent Hence Lemma~\ref{lem:reduce-gap} is proved.
\end{proof}

By Lemma~\ref{lem:initial-gap}, we know that at some arbitrary time $t$, the gap will be $O(\log n)$ with high probability. Now applying Lemma~\ref{lem:reduce-gap} once with $L = O(\log n)$ and $\ell = O(\log \log n)$ with appropriately chosen constants, we get $\prob(G^{t+L}\geq \frac{\log \log n}{\log k} + O(\log \log n) + \gamma) \leq O((\log \log n)^{-4})$. Applying the lemma again with $L = O(\log \log n)$ and $\ell= O(\log \log \log n)$ with appropriately chosen constants, we get\\ $\prob(G^t > \frac{\log \log n}{\log k} + c \log \log \log n) \leq \frac{c}{(\log \log n)^4}$ when time $t = \omega(\log n)$.

We now show that as more balls are placed, the probability that the gap exceeds a particular value increases. This is true by Lemma 4 from \cite{TW14}:
\begin{lemma}\label{lem:time-reduction}\cite{TW14} 
For $t \geq t'$, $G^{t'}$ is stochastically dominated by $G^t$. Thus $E[G^{t'}] \leq E[G^t]$ and for every $z$, $\prob(G^{t'} \geq z) \leq \prob(G^t \geq z)$.
\end{lemma}

Although the setting is different in \cite{TW14}, their proof of Lemma \ref{lem:time-reduction} applies here as well.  Thus knowing the gap is large when time $t = \omega(\log n)$ with probability $O((\log \log n)^{-4})$, implies that for all values of $t' < t$, the gap exceeds the desired value with at most the same probability. Substituting $k = 2^{d/2.17}$ in $\prob(G^t > \frac{\log \log n}{\log k} + c \log \log \log n) \leq \frac{c}{(\log \log n)^4}$ and modifying the inequality to talk about max. load after $m$ balls have been thrown results in the lemma statement.

\noindent Thus concludes the proof of Lemma~\ref{lem:finite-heavy-max-load}.
\end{proof}

\noindent Putting together Lemma~\ref{lem:finite-heavy-num-probes} and Lemma~\ref{lem:finite-heavy-max-load}, we get Theorem~\ref{the:finite-heavy}. 
\end{proof}


\section{Lower Bound on Maximum Bin Load}
\label{sec:lower-bound}
 
 We now provide a lower bound to the maximum load of any bin after using \fd{d} as well as other types of algorithms which use a variable number of probes for Class 1 type algorithms as defined by V\"ocking \cite{V03}. Class 1 algorithms are those where for each ball, the locations are chosen uniformly and independently at random from the bins available. We first give a general theorem for this type of algorithm and then apply it to \fd{d}.\\
 
\begin{theorem} \label{the:class1-lower-bound}
Let \alg{k} be any algorithm that places $m$ balls into $n$ bins, where $m \geq n$, sequentially one by one and satisfies the following conditions:
\begin{enumerate}
\item At most $k$ probes are used to place each ball.
\item For each ball, each probe is made uniformly at random to one of the $n$ bins.
\item For each ball, each probe is independent of every other probe.
\end{enumerate}
The maximum load of any bin after placing all $m$ balls using \alg{k} is at least $\frac{m}{n} + \frac{\ln \ln n}{\ln k} - \Theta(1)$ with high probability.
\end{theorem}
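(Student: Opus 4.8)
The plan is a ``reverse layered induction'' (stepping-up) argument: whereas the proof of Lemma~\ref{lem:finite-light-max-load} upper bounds the number of bins at each load level, here one would lower bound it and push the bound up one level at a time for roughly $\frac{\ln\ln n}{\ln k}$ levels.

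The new ingredient, replacing the fixed-$d$ ``all probes land high'' step, would be a coupling observation uniform over the algorithm's adaptive choice of how many probes to make. Fix a ball and the history just before it, and suppose a $\mu$-fraction of bins currently have load at least $i$. Couple the probing with a vector of $k$ i.i.d.\ uniform bins $P_1,\dots,P_k$ that the algorithm reveals one at a time, stopping after some $N\le k$ of them. Conditioned on the history, $P(\text{all of }P_1,\dots,P_k\text{ have load}\ge i)=\mu^k$ by conditions~(1)--(3); on that event the ball, being placed in one of the probed bins $P_1,\dots,P_N$, goes into a bin of load $\ge i$ and hence reaches height $\ge i+1$. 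So every ball reaches height $\ge i+1$ with conditional probability at least $\mu^k$, whatever the adaptive rule; this is exactly what is needed for the layered induction to survive variable probing.

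I would then run the induction much as in the classical $\gd{d}$ lower bound. By monotonicity of loads it suffices to exhibit, at time $m$, a bin of load $\tfrac{m}{n}+\tfrac{\ln\ln n}{\ln k}-\Theta(1)$, and one works inside a window of balls near the end (long enough to carry the induction, short enough that the running average stays within $O(1)$ of $\tfrac{m}{n}$). For the base level $i_0:=\lceil m/n\rceil-1$ I would argue by dichotomy: either the maximum load is already $\ge\tfrac{m}{n}+\tfrac{\ln\ln n}{\ln k}$ (done, by monotonicity), or all loads are capped and comparing $\sum_v \mathrm{load}(v)=m$ against the cap forces at least $n/\Theta\!\big(\tfrac{\ln\ln n}{\ln k}\big)$ bins to have load $\ge i_0$. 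For the step from level $i$ to $i+1$: if $\mu_i n$ bins have load $\ge i$ at time $m$, then a $\mu_i$-fraction of bins have load $\ge i$ throughout the part of the window after those bins are formed, so the coupling observation plus a Chernoff bound would give $\Omega(\ell_i\mu_i^{\,k})$ balls forced to height $\ge i+1$ there with high probability (valid while this is $\Omega(\log n)$), where $\ell_i$ is the relevant window length; and since a fixed bin is probed only $O(k)$ times over a $\Theta(n)$-ball window and the probes are uniform, these forced balls would land in a comparable number of \emph{distinct} bins, each then of load $\ge i+1$. This yields a recurrence $\mu_{i+1}\gtrsim\mu_i^{\,k}/\mathrm{polylog}(n)$, so $\log(1/\mu_i)$ grows by about a factor $k$ per level; solving it keeps $\mu_i n$ above $\log n$ for $\tfrac{\ln\ln n}{\ln k}-\Theta(1)$ levels above $i_0$, and the last $O(1)$ levels (where the count is too small for a Chernoff bound) would be handled by a direct second-moment/union-bound estimate that each still holds at least one bin. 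A union bound over the $O\!\big(\tfrac{\ln\ln n}{\ln k}\big)$ per-level failure events, each of probability $n^{-\Omega(1)}$, would then give the theorem, and Theorem~\ref{the:fd-lower-bound} follows immediately by taking $k=2^{\Theta(d)}$.

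The main obstacle will be the base case: a general Class~1 algorithm need not keep its load vector balanced, and the pure counting bound above only certifies a constant \emph{fraction} of bins at a level $\Theta\!\big(\tfrac{\ln\ln n}{\ln k}\big)$ \emph{below} $\tfrac{m}{n}$, which would lose a constant factor rather than an additive $\Theta(1)$; closing this would require using the uniformity of the probes to rule out configurations with many bins far below $\tfrac{m}{n}$ yet a small maximum, and then absorbing the residual lower-order slack into the $\Theta(1)$. A secondary technical point will be making the ``forced balls $\Rightarrow$ distinct bins'' step, and the adaptivity of the probe count, go through cleanly at every level.
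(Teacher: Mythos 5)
There is a genuine gap here, and it is the one you flag yourself: the base case in the heavily loaded regime. Your counting dichotomy certifies a $1/\Theta\!\big(\tfrac{\ln\ln n}{\ln k}\big)$ fraction of bins at level $\lceil m/n\rceil-1$ only \emph{at time $m$}, whereas the layered induction needs its base layer at the \emph{start} of a window long enough to carry $\Theta\!\big(\tfrac{\ln\ln n}{\ln k}\big)$ levels; at that earlier time the guaranteed level is $\Theta\!\big(\tfrac{\ln\ln n}{\ln k}\big)$ below $m/n$, so the climb only recovers a constant fraction of $\tfrac{\ln\ln n}{\ln k}$ rather than all of it minus $\Theta(1)$. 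This is not a technicality to be absorbed: controlling how unbalanced an adaptive Class~1 allocation can be near time $m$ is precisely the hard content of the heavily loaded lower bound, and your sketch does not supply it. A second, quantitative problem is the recurrence $\mu_{i+1}\gtrsim\mu_i^{\,k}/\mathrm{polylog}(n)$: solving $\log(1/\mu_{i+1})= k\log(1/\mu_i)+\Theta(\log\log n)$ keeps $\mu_i n\ge\log n$ for only about $\tfrac{\log\log n-\log\log\log n}{\log k}$ levels, i.e.\ it yields $\tfrac{m}{n}+\tfrac{\ln\ln n}{\ln k}-\Theta(\log\log\log n)$, weaker than the stated $-\Theta(1)$; you would need the per-level loss to be $O(1)$ (or $O(k)$), as in the classical argument, not polylogarithmic.

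For comparison, the paper's proof sidesteps all of this with a short reduction: couple \alg{k} with \gd{k} by letting the $w_t$ bins probed by \alg{k} at step $t$ be the first $w_t$ of the $k$ bins probed by \gd{k}; since \gd{k} places in the least loaded of a superset of the probed bins, \gd{k} is majorized by \alg{k}, and the known lower bound $\tfrac{m}{n}+\tfrac{\ln\ln n}{\ln k}-\Theta(1)$ for \gd{k} with $m\ge n$ (Berenbrink et al.) transfers directly. Your ``all $k$ potential probes land high with probability $\mu^k$'' observation is essentially a per-step shadow of this majorization, but instead of invoking the existing \gd{k} result you attempt to re-derive it from scratch, which is where the base case and the recurrence break down. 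Either import the \gd{k} lower bound via majorization as above, or be prepared to reproduce the full heavily-loaded analysis; as written, the proposal does neither.
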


\begin{proof}
We show that \gd{k} is majorized by \alg{k}, i.e. \gd{k} always performs better than \alg{k} in terms of load balancing. Thus any lower bound that applies to the max. load of any bin after using \gd{k} must also apply to \alg{k}.

Let the load vectors for \gd{k} and \alg{k} after $t$ balls have been placed using the respective algorithms be $u^t$ and $v^t$ respectively. We use induction on the number of balls placed to prove our claim of majorization. Initially, no ball is placed and by default $u^0$ is majorized by $v^0$. Assume that $u^{t-1}$ is majorized by $v^{t-1}$. We now use the standard coupling argument to prove the induction hypothesis. For the placement of the $t^{\text{th}}$ ball, let \alg{k} use $w_t$ probes. Couple \gd{k} with \alg{k} by letting the first $w_t$ bins probed by \gd{k} be the same bins probed by \alg{k}. \gd{k} will always make at least $w_t$ probes and thus possibly makes probes to lesser loaded bins than those probed by \alg{k}. Since \gd{k} places a ball into the least loaded bin it finds, it will place a ball into a bin with load at most the same as the one chosen by \alg{k}. Therefore $u^t$ is majorized by $v^t$. Thus by induction, we see that $u^t$ is majorized by $v^t$ for all $0 \leq t \leq m$. Therefore \gd{k} is majorized by \alg{k}.

It is known that the max. load of any bin after the placement of $m$ balls into $n$ bins ($m \geq n$) using \gd{k} is at least $\frac{m}{n} + \frac{\ln \ln n}{\ln k} - \Theta(1)$ with high probability \cite{BCSV06}. Therefore, the same lower bound also applies to \alg{k}.
\end{proof}

Now we are ready to prove our lower bound on the max. load of any bin after using \fd{d}.\\

\begin{theorem} \label{the:fd-lower-bound}
The maximum load of any bin after placing $m$ balls into $n$ bins using \fd{d}, where maximum number of probes allowed per ball is $2^{\Theta(d)}$, is at least $\frac{m}{n} + \frac{\ln \ln n}{\Theta(d)} - \Theta(1)$ with high probability.
\end{theorem}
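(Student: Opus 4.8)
The plan is to recognize \fd{d} as a member of the class of algorithms handled by Theorem~\ref{the:class1-lower-bound} and simply read off the bound, taking care of one minor mismatch in the hypotheses. First I would note that, by construction, \fd{d} issues at most $k := 2^{\Theta(d)}$ probes for each ball, and that every such probe is made uniformly at random to one of the $n$ bins. Since $k = 2^{\Theta(d)}$ (and, concretely, $k \in \{2^{2d/3}, 2^{d/2.17}\}$ in the two regimes analyzed earlier), we have $\ln k = \Theta(d)$, so that $\frac{\ln \ln n}{\ln k} = \frac{\ln \ln n}{\Theta(d)}$, which already identifies the target quantity.

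The one point needing care is that \fd{d}, as written, probes a \emph{new} (previously unprobed) bin at each step, so within a single ball the probes are sampled without replacement and hence are not literally independent as condition~3 of Theorem~\ref{the:class1-lower-bound} asks. I would handle this by re-running the majorization argument of that theorem directly against \fd{d} rather than invoking it verbatim. Couple \gd{k} with \fd{d} by letting \fd{d} place its $t$-th ball using $w_t \le k$ probes, and letting the first $w_t$ coordinates of \gd{k}'s $k$ probes for ball $t$ be exactly the bins that \fd{d} probes, with the remaining $k - w_t$ probes of \gd{k} chosen independently and uniformly; $w_t$ is a stopping time with respect to the revealed probes, so this is a valid adaptive coupling, exactly as in the proof of Theorem~\ref{the:class1-lower-bound}. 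One then checks that, under every one of the four placement rules of \fd{d}, the ball is placed in the least loaded bin among those probed (an empty bin, if one is probed, being least loaded). Hence \gd{k}, which places into the least loaded bin among a superset of \fd{d}'s probes, places ball $t$ at height no larger than \fd{d} does, and an induction on $t$ identical to the one in Theorem~\ref{the:class1-lower-bound} shows that the load vector of \gd{k} is majorized by that of \fd{d} for all $0 \le t \le m$. In particular the maximum load under \fd{d} is at least that under \gd{k}.

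Finally I would appeal to the known lower bound for \gd{k} from Berenbrink et al.~\cite{BCSV06} (the same one used inside Theorem~\ref{the:class1-lower-bound}): for $m \ge n$, placing $m$ balls into $n$ bins with \gd{k} yields a maximum load of at least $\frac{m}{n} + \frac{\ln \ln n}{\ln k} - \Theta(1)$ with high probability. Combining this with the majorization of \gd{k} by \fd{d} and substituting $\ln k = \Theta(d)$ gives the claimed bound $\frac{m}{n} + \frac{\ln \ln n}{\Theta(d)} - \Theta(1)$ for \fd{d}. I expect the only genuine subtlety to be the sampling-without-replacement mismatch just described — that is, that one cannot cite Theorem~\ref{the:class1-lower-bound} as a black box but must reuse its coupling — with everything else being routine bookkeeping (verifying the placement rules, and tracking the constant hidden by $\Theta(d)$ coming from $k = 2^{\Theta(d)}$).
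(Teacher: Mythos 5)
Your proposal is correct and follows essentially the same route as the paper: the paper's proof of Theorem~\ref{the:fd-lower-bound} is exactly the observation that \fd{d} uses at most $2^{\Theta(d)}$ uniform probes, so Theorem~\ref{the:class1-lower-bound} (whose proof is the \gd{k}-coupling you reproduce, combined with the \cite{BCSV06} lower bound for \gd{k}) applies with $k = 2^{\Theta(d)}$ and $\ln k = \Theta(d)$. One small caveat: your without-replacement patch does not actually repair the issue it flags---forcing the first $w_t$ probes of \gd{k} to coincide with $w_t$ \emph{distinct} bins would distort \gd{k}'s i.i.d.-uniform probe law---but the paper's analyses throughout treat \fd{d}'s probes as independent uniform samples, under which reading there is nothing to patch and your argument coincides with the paper's.
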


\begin{proof}
We see that \fd{d} uses at most $2^{\Theta(d)}$ probes and satisfies the requirements of Theorem~\ref{the:class1-lower-bound}. Thus by substituting $k = 2^{\Theta(d)}$, we get the desired bound.
\end{proof}


\section{Experimental Results}
\label{sec:experiments}

\begin{table}[ht]
\caption{Experimental results for the maximum load for $n$ balls and $n$ bins based on 100 experiments for each configuration. Note that the maximum number of probes per ball in \fd{d}, denoted as $k$, is chosen such that the average number of probes per ball is fewer than $d$.}
\centering \vspace{1em}
\resizebox{1.0\columnwidth}{!}{%
\begin{tabular}{|c||c|c|c|c|c|c|c|c|c|}
  \hline
  & \multicolumn{3}{|c|}{$d$ = 2, $k$ = 3} & \multicolumn{3}{|c|}{$d$ = 3, $k$ = 10} & \multicolumn{3}{|c|}{$d$ = 4, $k$ = 30} \\
  \cline{2-10}
  n & $\gd{d}$ & $\lt{d}$ & $\fd{d}$ & $\gd{d}$ & $\lt{d}$ & $\fd{d}$ & $\gd{d}$ & $\lt{d}$ & $\fd{d}$ \\
  \hline \hline
  \multirow{3}{*}{$2^8$} & \textbf{2}...11\% & \textbf{2}...43\% & \textbf{2}...81\% & \textbf{2}...88\% & \textbf{2}...100\% & \textbf{2}...100\% & \textbf{2}...100\% & \textbf{2}...100\% & \textbf{2}...100\% \\
  & \textbf{3}...87\% & \textbf{3}...57\% & \textbf{3}...19\% & \textbf{3}...12\% &  &  &  &  &  \\
  & \textbf{4}... 2\% &  &  &  &  &  &  &  &  \\
  \hline\multirow{3}{*}{$2^{12}$} &  &  & \textbf{2}...10\% & \textbf{2}...12\% & \textbf{2}...96\% & \textbf{2}...100\% & \textbf{2}...93\% & \textbf{2}...100\% & \textbf{2}...100\% \\
  & \textbf{3}...99\% & \textbf{3}...100\% & \textbf{3}...90\% & \textbf{3}...88\% & \textbf{3}... 4\% &  & \textbf{3}... 7\% &  &  \\
  & \textbf{4}... 1\% &  &  &  &  &  &  &  &  \\
  \hline\multirow{3}{*}{$2^{16}$} &  &  &  &  & \textbf{2}...49\% & \textbf{2}...100\% & \textbf{2}...31\% & \textbf{2}...100\% & \textbf{2}...100\% \\
  & \textbf{3}...63\% & \textbf{3}...98\% & \textbf{3}...100\% & \textbf{3}...100 & \textbf{3}...51\% &  & \textbf{3}...69\% &   &  \\
  & \textbf{4}...37\% & \textbf{4}... 2\% &  &  &  &  &  &  &  \\
  \hline\multirow{3}{*}{$2^{20}$} &  &  &  &  &  & \textbf{2}...100\% &  & \textbf{2}...100\% & \textbf{2}...100\% \\
  &  & \textbf{3}...96\% & \textbf{3}...100\% & \textbf{3}...100\% & \textbf{3}...100\% &  & \textbf{3}...100\% &  &  \\
  & \textbf{4}...100\% & \textbf{4}... 4\% &  &  &  &  &  &  &  \\
  \hline\multirow{3}{*}{$2^{24}$} &  &  &  &  &  & \textbf{2}...100\% &  & \textbf{2}...100\% & \textbf{2}...100\% \\
  &  & \textbf{3}...37\% & \textbf{3}...100\% & \textbf{3}...100\% & \textbf{3}...100\% &  & \textbf{3}...100\% &  &  \\
  & \textbf{4}...100\% & \textbf{4}...63\% &  &  &  &  &  &  &  \\
  \hline
\end{tabular}
}
\label{tab:experiments}
\end{table}
 
We experimentally compare the performance of $\fd{d}$ with $\lt{d}$ and $\gd{d}$ in Table~\ref{tab:experiments}. Similar to the experimental results in \cite{V03}, we perform all 3 algorithms in different configurations of bins and $d$ values. Let $k$ be the maximum  number of probes allowed to be used by \fd{d} per ball. For each value of $d \in [2,4]$, we choose a corresponding value of $k$ such that the average number of probes required by each ball in $\fd{d}$ is at most $d$. For each configuration, we run each algorithm 100 times and note the percentage of times the maximum loaded bin had a particular value. It is of interest to note that $\fd{d}$, despite using on average less than $d$ probes per ball, appears to perform better than both $\gd{d}$ and $\fd{d}$ in terms of maximum load. 


\section{Conclusions and Future Work}
\label{sec:conc}
 
In this paper, we have introduced a novel algorithm called $\fd{d}$ for the well-studied load balancing problem.  This algorithm combines the benefits of two prominent algorithms, namely, $\gd{d}$ and $\lt{d}$. $\fd{d}$ generates a maximum load comparable to that of $\lt{d}$, while being as fully decentralized as $\gd{d}$. From another perspective, we observe that  $\fd{\log d}$  and $\gd{d}$ result in a comparable maximum load, while the number of probes used by $\fd{\log d}$ is exponentially smaller than that of $\gd{d}$.   In other words, we exhibit an algorithm that performs as well as an optimal algorithm, with significantly less computational requirements.  We believe that our work has opened up a new family of algorithms that could prove to be quite useful in a variety of contexts spanning both theory and practice. 

A number of questions arise out of our work. From a theoretical perspective, we are interested in developing a finer-grained analysis of the number of probes; experimental results suggest the number of probes used to place the $i^{th}$ ball depends on the congruence class of $i$ modulo $n$. From an applied perspective, we are interested in understanding how $\fd{d}$ would play out in real world load balancing scenarios like cloud computing, where the environment (i.e. the servers, their interconnections, etc.) and the workload (jobs, applications, users, etc.) are likely to be a lot more heterogeneous and dynamic.


\section*{Acknowledgements}
We are thankful to Anant Nag for useful discussions and developing a balls-in-bins library \cite{Nag14} that was helpful for our experiments. We are also grateful to Thomas Sauerwald for his helpful thoughts when he visited  Institute for Computational and Experimental Research in Mathematics (ICERM) at Brown University. Finally, John Augustine and Amanda Redlich are thankful to ICERM for having hosted them as part of a semester long program.

\bibliographystyle{siamplain}
\bibliography{references}

\begin{thebibliography}{10}

\bibitem{AMRU16}
{\sc J.~Augustine, W.~K. Moses~Jr., A.~Redlich, and E.~Upfal}, {\em Balanced
  allocation: patience is not a virtue}, in Proceedings of the Twenty-Seventh
  Annual ACM-SIAM Symposium on Discrete Algorithms, Society for Industrial and
  Applied Mathematics, 2016, pp.~655--671.

\bibitem{ABKU99}
{\sc Y.~Azar, A.~Z. Broder, A.~R. Karlin, and E.~Upfal}, {\em Balanced
  allocations}, SIAM journal on computing, 29 (1999), pp.~180--200.

\bibitem{BCSV06}
{\sc P.~Berenbrink, A.~Czumaj, A.~Steger, and B.~V{\"o}cking}, {\em Balanced
  allocations: The heavily loaded case}, SIAM Journal on Computing, 35 (2006),
  pp.~1350--1385.

\bibitem{BKSS13}
{\sc P.~Berenbrink, K.~Khodamoradi, T.~Sauerwald, and A.~Stauffer}, {\em
  Balls-into-bins with nearly optimal load distribution}, in Proceedings of the
  25th ACM symposium on Parallelism in algorithms and architectures, ACM, 2013,
  pp.~326--335.

\bibitem{CS01}
{\sc A.~Czumaj and V.~Stemann}, {\em Randomized allocation processes}, Random
  Structures \& Algorithms, 18 (2001), pp.~297--331.

\bibitem{FXS11}
{\sc S.~Fu, C.-Z. Xu, and H.~Shen}, {\em Randomized load balancing strategies
  with churn resilience in peer-to-peer networks}, Journal of Network and
  Computer Applications, 34 (2011), pp.~252--261.

\bibitem{J14}
{\sc S.~Janson}, {\em Tail bounds for sums of geometric and exponential
  variables}, Technical report,  (2014).

\bibitem{MU05}
{\sc M.~Mitzenmacher and E.~Upfal}, {\em Probability and computing: Randomized
  algorithms and probabilistic analysis}, Cambridge University Press, 2005.

\bibitem{Nag14}
{\sc A.~Nag}, {\em {Problems in Balls and Bins Model}}, master's thesis, Indian
  Institute of Technology Madras, India, 2014.

\bibitem{PTW10}
{\sc Y.~Peres, K.~Talwar, and U.~Wieder}, {\em The (1+ $\beta$)-choice process
  and weighted balls-into-bins}, in Proceedings of the twenty-first annual
  ACM-SIAM symposium on Discrete Algorithms, Society for Industrial and Applied
  Mathematics, 2010, pp.~1613--1619.

\bibitem{RS98}
{\sc M.~Raab and A.~Steger}, {\em ``balls into bins''--a simple and tight
  analysis}, in Randomization and Approximation Techniques in Computer Science,
  Springer, 1998, pp.~159--170.

\bibitem{SX07}
{\sc H.~Shen and C.-Z. Xu}, {\em Locality-aware and churn-resilient
  load-balancing algorithms in structured peer-to-peer networks}, Parallel and
  Distributed Systems, IEEE Transactions on, 18 (2007), pp.~849--862.

\bibitem{SLZGJCP14}
{\sc X.-J. Shen, L.~Liu, Z.-J. Zha, P.-Y. Gu, Z.-Q. Jiang, J.-M. Chen, and
  J.~Panneerselvam}, {\em Achieving dynamic load balancing through mobile
  agents in small world p2p networks}, Computer Networks, 75 (2014),
  pp.~134--148.

\bibitem{TW14}
{\sc K.~Talwar and U.~Wieder}, {\em Balanced allocations: A simple proof for
  the heavily loaded case}, in Automata, Languages, and Programming, Springer,
  2014, pp.~979--990.

\bibitem{V03}
{\sc B.~V{\"o}cking}, {\em How asymmetry helps load balancing}, Journal of the
  ACM (JACM), 50 (2003), pp.~568--589.

\end{thebibliography}

\end{document}